\let\NAT@parse\undefined
\pgfplotsset{compat=1.3}
\newtheorem{corollary}{Corollary}
\newtheorem{lemma}{Lemma}
\newtheorem{definition}{Definition}
\def \M{\mathcal{M}}
\def \L {{\mathcal L }}
\def\M{ {\mathcal{M}} }
\def\blfootnote{\xdef\@thefnmark{}\@footnotetext}
\begin{document}
\title{ On the Calculation of the Incomplete MGF\\ with Application{s to Wireless Communications}}

\author{
\vspace{3mm}
 F.J. Lopez-Martinez, J.M. Romero-Jerez and J.F. Paris}

{}
\maketitle
\begin{abstract}
\blfootnote{\noindent This work was presented in part at European Wireless 2015. F. J. Lopez-Martinez and J. F. Paris are with Departamento de Ingenier\'ia de Comunicaciones, Universidad de Malaga - Campus de Excelencia Internacional Andaluc\'ia Tech., Malaga 29071, Spain. J. M. Romero-Jerez is  with Departamento de Tecnolog\'ia Eectr\'onica, Universidad de Malaga - Campus de Excelencia Internacional Andaluc\'ia Tech., Malaga 29071, Spain. Contact e-mail: fjlopezm@ic.uma.es.\\
This work has been submitted to the IEEE for possible publication. Copyright may be transferred without notice, after which this version may no longer be accessible.

}
The incomplete moment generating function (IMGF) has paramount relevance in communication theory, since it appears in a plethora of scenarios when analyzing the performance of communication systems. We here present a general method for calculating the IMGF of any arbitrary fading distribution.  Then, we provide exact closed-form expressions for the IMGF of the very general $\kappa$-$\mu$ shadowed fading model, which includes the popular $\kappa$-$\mu$, $\eta$-$\mu$, Rician shadowed and other classical models as particular cases. We illustrate the practical applicability of this result by analyzing {several scenarios of interest in wireless communications: (1) Physical layer security in the presence of an eavesdropper, (2) Outage probability analysis with interference and background noise, (3) Channel capacity with side information at the transmitter and the receiver, and (4) Average bit-error rate with adaptive modulation}, when the fading on the desired link can be modeled by any of the aforementioned distributions.
\end{abstract}

\vspace{0mm}
\begin{keywords}
Fading channels, $\kappa$-$\mu$ shadowed fading, moment generating function, physical layer security, secrecy capacity, {outage probability, channel capacity, bit error rate.}
\end{keywords}

\IEEEpeerreviewmaketitle

\section{Introduction}

\PARstart{T}{he} moment generating function (MGF) has played a pivotal role in communication theory for decades, as a tool for evaluating the performance of communication systems in very different scenarios \cite{Simon1998,DiRenzo2010,Yilmaz2012,Rao2015}. The MGF of the signal-to-noise ratio (SNR) $\gamma$, defined as the Laplace transform of the Probability Density Function (PDF) of $\gamma$, is well-known for most popular fading distributions \cite{AlouiniBook,Ermolova2008,Paris2014} and hence enables for a simple characterization of the performance metrics of interest in closed-form.

A more general function extensively used in communication theory is the so-called incomplete MGF (IMGF), {also referred to as truncated MGF, or interval MGF}. This function has an additional degree of freedom by allowing the lower (or equivalently, upper) limit of the integral in the Laplace transform, say $\zeta$, to be greater than zero (or equivalently, lower than $\infty$), and also appears when characterizing the performance in a number of scenarios of interest: {order statistics \cite{Nuttall2002,Nam2010,Yang2011,Nam2014},} symbol and bit error rate calculation with multiantenna reception \cite{Xiaodi2006,Loskot2009}, capacity analysis in fading channels \cite{DiRenzo2010}, outage probability analysis in cellular systems \cite{Annamalai2001,Morales2012}, adaptive scheduling techniques \cite{Gaaloul2012}, cognitive relay networks \cite{Yang2015} or physical layer security \cite{JuanmaJavi15}. Despite its usefulness, to the best of our knowledge the expressions of the IMGF are largely unknown for fading distributions other than the classical ones: Rayleigh, Rice, Nakagami-$m$ and Hoyt. {In fact, a general and systematic way to find analytical expressions for the IMGF does not yet exist, thus requiring the use of state-of-the-art numerical techniques for its evaluation \cite{DiRenzo2010}.}


Thence, there is a twofold motivation for this paper from a purely communication-theoretic perspective: first, we present a {general theory} for deriving the IMGF of the SNR for an arbitrary distribution. {Specifically, we show that the IMGF of \textit{any} fading distribution is given in terms of an inverse Laplace transform of a shifted version of the MGF scaled by the Laplace domain variable. This implies that the IMGF should have a similar functional form as the cumulative distribution function (CDF).} { Secondly, we exemplify the usefulness of this result by deriving} a closed-form expression for the IMGF of the $\kappa$-$\mu$ shadowed distribution \cite{Paris2014,Cotton2015}, which includes popular fading distributions such as \mbox{$\kappa$-$\mu$}, $\eta$-$\mu$ \cite{Yacoub07,Laureano2015} or Rician-shadowed \cite{Abdi2003} as particular cases, and for all of which the IMGF had not been previously reported. 

In order to illustrate its {practical applicability, we introduce several scenarios of interest in wireless communications: As a main application,} we focus on a physical-layer security set-up on which two legitimate peers (Alice and Bob) wish to communicate in the presence of an external eavesdropper (Eve). The characterization of the maximum rate at which a secure communication can be attained, i.e. the \emph{secrecy capacity} $C_S$, is a classical problem \cite{Wyner1975,Cheong1978} in communication theory. Remarkably, the research on physical-layer security of communication systems operating in the presence of fading has been boosted in the last years ever since the original works in \cite{Barros2006,Bloch2008}. The fact that Eve and Bob observe independent fading realizations adds an additional layer of security to communication compared to the conventional set-up for the Gaussian wiretap channel \cite{Cheong1978}. Hence, a secure communication is feasible even when the average SNR at Bob is lower than the average SNR at Eve. 

In the literature, there is a great interest on understanding how the consideration of more sophisticated fading models than Rayleigh may impact the secrecy performance attending to different metrics: the outage probability of secrecy capacity (OPSC), the probability of strictly-positive secrecy capacity (SPSC) or the average secrecy capacity (ASC). Specifically, fading models such as Nakagami-$m$ \cite{Sarkar09}, Rician \cite{Liu13}, Weibull \cite{Liu13b}, two-wave with diffuse power \cite{Wang14}, Nakagami-$q$ \cite{Romero2014} or $\kappa$-$\mu$ \cite{Bhargav2016} have been considered. However, the statistical characterization of the OPSC in a tractable form is often unfeasible due to the involved mathematical derivations. Thus, approximations are usually required for evaluating the OPSC \cite{Wang14,Bhargav2016}, being only possible the calculation of the SPSC in closed-form \cite{Liu13,Liu13b,Bhargav2016}. 

Very recent works have proposed novel approaches to deriving the secrecy performance metrics in a general way: in \cite{Gomez2015}, a duality between the OPSC and the outage probability analysis in the presence of interference and background noise was presented, which greatly simplifies the analysis of the OPSC calculation for an arbitrary choice of fading distributions for the desired and eavesdropper links. In \cite{Peppas2016}, a unified MGF approach to the analysis of physical-layer security in wireless systems was introduced, allowing for a numerical evaluation of the secrecy performance metrics for arbitrary fading distributions.

{As we will later see, the practical application of our results allows that the OPSC can be evaluated directly by specializing the IMGF of the SNR at Bob at some specific values}. Thus, the OPSC (and hence the SPSC as a special case) can be evaluated in closed-form provided that such IMGF is given in closed-form. This is exemplified for the very general case of the $\kappa$-$\mu$ shadowed distribution \cite{Paris2014}.

{We also illustrate how the IMGF can be used to obtain other performance metrics in relevant scenarios in communication theory which, despite looking rather dissimilar at a first glance, they all require for the computation of the IMGF. The first one of these additional scenarios is the outage probability analysis of wireless communications systems affected by interference and background noise. As previously stated, this problem was recently shown to be mathematically equivalent to the OPSC in \cite{Gomez2015}; thus, the outage probability in this scenario will also be expressed in terms of the IMGF under the same conditions assumed when computing the OPSC. The second additional scenario is related to the analysis of the channel capacity when side information is available at both the transmitter and the receiver sides \cite{Goldsmith1997}. According to the framework introduced in \cite{DiRenzo2010}, the capacity in this scenario can be expressed in terms of the IMGF. Finally, we also analyze a classical performance metric in communication theory, which is the average bit-error rate (ABER) with adaptive modulation \cite{Gold05}. The ABER in this scenario under \textit{arbitrary} fading is also given in terms of the IMGF of the fading distribution.}

The remainder of this paper is structured as follows: in Section \ref{mathematics}, the main mathematical contributions of this paper are presented, within the most relevant is a general way to deriving the IMGF of any arbitrary fading distribution. Closed-form expressions for the IMGF of the $\kappa$-$\mu$ shadowed distribution are also given, as well as for all the special cases included therein ($\kappa$-$\mu$, $\eta$-$\mu$, Rician shadowed, Rician, Nakagami-$m$, Nakagami-$q$, Rayleigh and one-sided Gaussian). Then, in Section \ref{analysis} these mathematical results are used to present an IMGF-based approach to the physical-layer security analysis in wireless systems. {Section \ref{analysis2} is devoted to illustrate how additional scenarios of interest in wireless communications can also be analyzed by using the IMGF.} Numerical results are given in Section \ref{numerical}, whereas the main conclusions are outlined in Section \ref{conclusions}.

\section{Mathematical Results}
\label{mathematics}


\begin{definition}
[Lower IMGF]
Let $X$ be a non-negative random variable and $\zeta$ a non-negative real number,
the lower IMGF of $X$ is defined as
\label{IMGF}
\begin{equation}
\label{def1}
\M_{X}^{l}(s,\zeta)
 \triangleq \int_{0}^{\zeta}  {e^{sx} f_X \left( x \right)dx}.
\end{equation}
\end{definition}
\vspace{5mm}
\begin{definition}
[Upper IMGF]
Let $X$ be a non-negative random variable and $\zeta$ a non-negative real number,
the upper IMGF of $X$ is defined as
\label{IMGFb}
\begin{equation}
\label{def1b}
\M_{X}^{u}(s,\zeta)
 \triangleq \int_{\zeta}^\infty  {e^{sx} f_X \left( x \right)dx}.
\end{equation}
\end{definition}
\vspace{5mm}
Obviously, the MGF of $X$ is obtained from these IMGFs as $\M_{X}(s)=\M_{X}^{l}(s,\infty)=\M_{X}^{u}(s,0)$. Both incomplete IMGFs are easily related through the following equation
\begin{align}
\label{obvious}
{\M}_X^u \left( {s,\zeta} \right) &= {\M}_X \left( s \right) - {\M}_X^l \left( {s,\zeta} \right),
\end{align}
In the following Lemma, we present a general expression for the IMGF of an arbitrarily distributed non-negative random variable.
\begin{lemma}
\label{lema1}
Let $X$ be a non-negative random variable with MGF $\M_{\gamma}(s)$. Its lower IMGFs
can be computed by the inverse Laplace transform of the scaled-shifted MGF, i.e.
\begin{align}
\label{eq_lema1}
{\M}_X^l \left( {s,\zeta} \right) &= {\L}^{ - 1} \left\{ {\frac{1}
{p}{\M}_X \left( {s - p} \right);p,z=\zeta} \right\},
\end{align}
where ${\L}\left\{ {h\left( z \right);z,p} \right\}
\triangleq \int_0^\infty  {e^{ - pz} h\left( z \right)dz}$ represents the Laplace transform
from the $z$-domain to the $p$-domain, and ${\L}^{ - 1} \left\{ {H\left( z \right);p,z} \right\}$ the
inverse Laplace transform from the \mbox{$p$-domain} to the $z$-domain.
\end{lemma}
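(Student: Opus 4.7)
The plan is to view the claim as an identity between a one-variable function of $\zeta$ (with $s$ held fixed as a parameter) and an inverse Laplace transform; equivalently, to verify that applying the \emph{forward} Laplace transform in the variable $\zeta$ to the left-hand side $\M_X^l(s,\zeta)$ reproduces the factor $\frac{1}{p}\M_X(s-p)$ on the right. This reduces the lemma to a direct computation of a double integral, after which uniqueness of the Laplace transform yields the stated inversion formula.

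Concretely, I would fix $s$ (real, in the region where $\M_X$ is finite) and define $g(\zeta) \triangleq \M_X^l(s,\zeta) = \int_0^\zeta e^{sx} f_X(x)\, dx$ for $\zeta \ge 0$. Then I would write
\begin{equation*}
\L\{g(\zeta);\zeta,p\} = \int_0^\infty e^{-p\zeta}\!\int_0^\zeta e^{sx} f_X(x)\, dx\, d\zeta,
\end{equation*}
and swap the order of integration. The region $\{(x,\zeta):0\le x \le \zeta\}$ rewrites as $\{(x,\zeta):x\ge 0,\ \zeta\ge x\}$, so the inner integral in $\zeta$ becomes $\int_x^\infty e^{-p\zeta}\, d\zeta = e^{-px}/p$, leaving
\begin{equation*}
\L\{g(\zeta);\zeta,p\} = \frac{1}{p}\int_0^\infty e^{(s-p)x} f_X(x)\, dx = \frac{1}{p}\,\M_X(s-p).
\end{equation*}
Inverting in $p$ and evaluating at $z=\zeta$ then gives the claimed identity \eqref{eq_lema1}.

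The only non-routine point is justifying the Fubini swap and specifying where the inverse Laplace transform makes sense. For $\Re(p)$ larger than the abscissa of convergence of $\M_X$ shifted by $s$ (so that $\M_X(s-p)$ is finite), the integrand $e^{-p\zeta}e^{sx}f_X(x)$ is absolutely integrable on $\{0\le x\le \zeta\}$, which legitimizes Tonelli/Fubini. In this half-plane $\frac{1}{p}\M_X(s-p)$ is analytic in $p$, and the Bromwich contour for the inverse transform must be chosen inside it; the resulting $g(\zeta)$ is uniquely determined because it is continuous (even absolutely continuous, with derivative $e^{s\zeta}f_X(\zeta)$) and of exponential order. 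The one subtle step is checking that the shift $s\mapsto s-p$ does not push us outside the region where the MGF is defined, but this is exactly why the inverse transform is taken along a vertical contour lying sufficiently far to the right.

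I expect the main (and essentially only) obstacle to be presentational rather than mathematical: stating the lemma cleanly without dwelling on the abscissa-of-convergence bookkeeping, while still making the Fubini step rigorous. Once the order of integration is exchanged, the rest is a one-line computation and an appeal to uniqueness of the Laplace transform.
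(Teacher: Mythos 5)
Your proposal is correct and follows essentially the same route as the paper: both compute the forward Laplace transform of $\zeta \mapsto \M_X^l(s,\zeta)$ and identify it as $\tfrac{1}{p}\M_X(s-p)$, then invert. The only difference is that you derive the key identity directly via a Fubini interchange (with the convergence bookkeeping made explicit), whereas the paper simply invokes the definite-integral and modulation properties of the Laplace transform.
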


\begin{proof}
Let us consider the upper IMGF in (\ref{def1}) as a function of the upper integration limit, i.e.
\begin{equation}
\label{eq_lema1_proof_2}
\Lambda(z)=\int_0^{z} {e^{sx} f_X \left( x \right)dx}.
\end{equation}
The Laplace transform of $\Lambda(z)$ in the $p$-domain can be expressed as
\begin{equation}
\label{eq_lema1_proof_3}
\begin{gathered}
  {\L}\left\{\Lambda \left( z \right);z,p\right\} = {\L}\left\{ {\int_0^z {e^{sx} f_X \left( x \right)dx;z,p} } \right\} = \frac{1}
{p}{\L}\left\{ {e^{sz} f_X \left( z \right);z,p} \right\} \hfill \\
   = \frac{1}
{p}{\L}\left\{ {f_X \left( x \right);z,p - s} \right\} = \frac{1}
{p}{\M}_X \left( {s - p} \right), \hfill \\
\end{gathered}
\end{equation}
where both the definite integral property and the modulation property of the Laplace transform have been applied in order to complete the proof.
\end{proof}
\vspace{4mm}

This Lemma provides a general way of computing the IMGF {of \textit{any} fading distribution} in terms of the conventional MGF. Interestingly, (\ref{eq_lema1}) involves an inverse Laplace transform of a shifted version of the MGF, scaled by the Laplace domain variable $p$. Thus, it is expectable that the IMGF has a functional form similar to the CDF, since the CDF arises as a particular case of the lower IMGF when evaluated in ${s=0}$, i.e.
\begin{equation}
F_X(\zeta)=\M_{X}^{l}(0,\zeta).
\end{equation}

{This result also suggests that the IMGF of \textit{any} distribution for which either the CDF or the MGF are not available in closed-form, will not be likely to have a closed-form expression. Otherwise, we would be finding an expression for such CDF or MGF as a special case. This is the case of some relevant fading distributions in the literature like Durgin and Rappaport's Two-Wave with Diffuse Power (TWDP) fading model \cite{Durgin2002}, or the Beckmann distribution \cite{Beckmann1962,Beckmann1964}. The CDF for these distributions is only available in integral form, but their MGFs have a closed-form expression \cite{AlouiniBook,Rao2015}. Another valid example is the Lognormal distribution, for which the CDF has a closed-form expression in terms of the Gaussian $Q$-function, but its MGF has been largely unknown \cite{Tellambura2010}.}

As consequences of Lemma \ref{lema1}, the IMGF of any non-negative RV can be obtained in a very general form by an inverse Laplace transform operation over the MGF. As we will now show, this Laplace transform can be computed in closed-form for a number of fading distributions of interest. In the following corollaries (and summarized in Table \ref{table01}), we derive closed-form expressions for the IMGF of the $\kappa$-$\mu$ shadowed distribution and the special cases included therein, namely the Rician shadowed, $\kappa$-$\mu$ and $\eta$-$\mu$ distributions. All these expressions are new in the literature to the best of our knowledge.

\vspace{2mm}
\begin{corollary}
\label{corolary1}
Let $\gamma$ be a $\kappa$-$\mu$ shadowed random variable with $E\left\{\gamma\right\}=\bar\gamma$, and non-negative
real shape parameters $\kappa$, $\mu$ and $m$, i.e, ${\gamma\sim \mathcal{S}_{\kappa\mu m}(\gamma;\kappa,\mu,m)}$\footnote{The symbol $\sim$ reads as \textit{statistically distributed as.}} \cite{Paris2014}. Then, its lower IMGF is given in the first entry of Table \ref{table01},
where $\Phi_2(\cdot)$ is the bivariate confluent hypergeometric function
defined in \cite[eq. 9.261.2]{Gradstein2007}.
\end{corollary}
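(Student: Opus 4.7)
My plan is to apply Lemma \ref{lema1} directly to the known MGF of the $\kappa$-$\mu$ shadowed distribution and then identify the resulting inverse Laplace transform with the Humbert $\Phi_2$ function. The starting point is the closed-form MGF reported in \cite{Paris2014}, which has the rational-power structure
\begin{equation}
\M_\gamma(s) \;=\; \frac{K}{\left(A - s\bar\gamma\right)^{\mu - m}\left(B - s\bar\gamma\right)^{m}},
\nonumber
\end{equation}
where $A$, $B$ and $K$ depend on $\kappa$, $\mu$, $m$ and $\bar\gamma$ but not on $s$. This structure is ideally suited to Lemma \ref{lema1} because the shifting $s \to s - p$ converts the arguments into linear polynomials in $p$.

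First, I substitute $s \to s - p$ and multiply by $1/p$ to obtain
\begin{equation}
\frac{1}{p}\,\M_\gamma(s - p) \;=\; \frac{K}{p\,\bigl((A - s\bar\gamma) + p\bar\gamma\bigr)^{\mu-m}\bigl((B - s\bar\gamma) + p\bar\gamma\bigr)^{m}}.
\nonumber
\end{equation}
A routine factorization of $\bar\gamma$ from each parenthesis turns this into a function of $p$ of the form ${C}/{\bigl[p\,(\alpha + p)^{\mu-m}(\beta + p)^{m}\bigr]}$, with $\alpha$, $\beta$, $C$ depending on the fading parameters and on $s$ but not on $p$.

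Second, I would invoke the known Laplace-inversion identity
\begin{equation}
\L^{-1}\!\left\{\frac{1}{p\,(\alpha+p)^{a}(\beta+p)^{b}};\,p,z\right\} \;=\; \frac{z}{\alpha^{a}\beta^{b}}\,\Phi_{2}\!\left(a,b;2;-\alpha z,-\beta z\right),
\nonumber
\end{equation}
which follows from the Laplace representation of the bivariate confluent hypergeometric function (or, equivalently, from expanding one factor as a Pochhammer series and recognizing the Kummer transform of the remaining two-factor inversion). Specializing this with $a = \mu - m$, $b = m$, and evaluating at $z = \zeta$ yields the claimed closed-form expression, after collecting the constants and rewriting them in the natural parameters $\kappa$, $\mu$, $m$ and $\bar\gamma$.

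The main obstacle will be matching bookkeeping rather than a deep analytical step: one must carefully track the constants produced by the factorizations (including the $\bar\gamma$ factors pulled out of each bracket and the scaling picked up inside $\Phi_2$), and verify that the particular Laplace pair invoked is indeed the appropriate one, since several close variants exist in the literature that differ in the normalization of $\Phi_2$ or in the sign convention. Once this accounting is done consistently, the first entry of Table \ref{table01} falls out and, by \eqref{obvious}, the corresponding upper IMGF is obtained at once.
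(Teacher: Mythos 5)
Your overall strategy is exactly the paper's: write the $\kappa$-$\mu$ shadowed MGF in rational-power form, apply Lemma~\ref{lema1}, and invert the resulting $p$-function via a known Laplace pair for the Humbert $\Phi_2$ function. However, the specific inversion identity you invoke is wrong, and not merely up to ``normalization or sign convention'' as you hedge. The correct pair (the one the paper takes from Srivastava--Manocha, p.~290) is
\begin{equation}
\L^{-1}\!\left\{\frac{1}{p\,(\alpha+p)^{a}(\beta+p)^{b}};\,p,z\right\}
=\frac{z^{\,a+b}}{\Gamma(1+a+b)}\,\Phi_{2}\!\left(a,b;\,1+a+b;\,-\alpha z,-\beta z\right),
\nonumber
\end{equation}
which follows by writing $p\,(\alpha+p)^{a}(\beta+p)^{b}=p^{\,1+a+b}\bigl(1+\alpha/p\bigr)^{a}\bigl(1+\beta/p\bigr)^{b}$, i.e.\ by pulling the powers of $p$ out of each bracket so that the leading $1/p$ becomes $1/p^{\,1+a+b}$. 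With $a=\mu-m$, $b=m$ this gives the prefactor $z^{\mu}/\Gamma(1+\mu)$ and the third $\Phi_2$ parameter $1+\mu$, which is precisely what appears in the first entry of Table~\ref{table01} (note $\Gamma(1+\mu)=\mu\,\Gamma(\mu)$ accounts for the $\mu^{\mu-1}/\Gamma(\mu)$ in the table). Your version, with prefactor $z/(\alpha^{a}\beta^{b})$ and third parameter $2$, is the $\mu=1$ (Rician shadowed) specialization incorrectly promoted to the general case, and even there the factor $\alpha^{-a}\beta^{-b}$ is spurious (it is only the $z\to\infty$ limit of the correct expression). Used as stated, your identity does not reproduce the table entry for general $\mu$: the power of $z$ and the $\Phi_2$ index are both wrong. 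Replacing your pair by the one above, and carrying through the constant $A=\mu^{\mu}m^{m}(1+\kappa)^{\mu}/[\bar\gamma^{\mu}(\mu\kappa+m)^{m}]$ with $\alpha=-(s-a)$, $\beta=-(s-b)$ where $a=\mu(1+\kappa)/\bar\gamma$ and $b=am/(\mu\kappa+m)$, closes the gap and recovers the paper's proof.
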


\begin{proof}
For the sake of clarity let us write the MGF of the $\kappa$-$\mu$ shadowed
distribution as follows
\begin{equation}
\label{eq_corolary1_proof1}
\begin{gathered}
  {\M}_{\gamma} \left( s \right) = ( - 1)^\mu  A(s - a)^{m - \mu } (s - b)^{ - m}  \hfill \\
  \left\{ \begin{gathered}
  A = \frac{{\mu ^\mu  m^m \left( {1 + \kappa } \right)^\mu  }}
{{\bar \gamma ^\mu  \left( {\mu \kappa  + m} \right)^m }} \hfill \\
  a = \frac{{\mu \left( {1 + \kappa } \right)}}
{{\bar \gamma }} \hfill \\
  b = \frac{{\mu \left( {1 + \kappa } \right)}}
{{\bar \gamma }}\frac{m}
{{\mu \kappa  + m}} \hfill \\
\end{gathered}  \right. \hfill \\
\end{gathered}.
\end{equation}

In order to apply Lemma \ref{lema1}, the scaled-shifted MGF of $\gamma$ is
first expressed as follows
\begin{equation}
\label{eq_corolary1_proof2}
\begin{gathered}
  \frac{1}
{p}{\M}_{\gamma} \left( {s - p} \right) = \frac{{A( - 1)^\mu  }}
{p}(s - p - a)^{m - \mu } (s - p - b)^{ - m}  \hfill \\
   = \frac{A}
{p}(p - s + a)^{m - \mu } (p - s + b)^{ - m}  \hfill \\
   = \frac{A}
{{p^{1 + \mu } }}\left( {1 - \left( {\frac{{s - a}}
{p}} \right)} \right)^{m - \mu } \left( {1 - \left( {\frac{{s - b}}
{p}} \right)} \right)^{ - m}  \hfill \\
   = \frac{A}
{{p^{1 + \mu } }}\left( {1 - \left( {\frac{{s - a}}
{p}} \right)} \right)^{ - (\mu  - m)} \left( {1 - \left( {\frac{{s - b}}
{p}} \right)} \right)^{ - m}.  \hfill \\
\end{gathered}
\end{equation}

To perform the inverse Laplace transform the following pair
can be considered \cite[pp. 290]{Srivastava1985}
\begin{equation}
\label{eq_corolary1_proof3}
\begin{gathered}
  {\L}\left\{ {x^{\gamma  - 1} \Phi _2^{\left( n \right)} \left( {\beta _1 ,...,\beta _n ;\gamma ;\lambda _1 x,...,\lambda _n x} \right);x,p} \right\} =  \hfill \\
   = \frac{{\Gamma \left( \gamma  \right)}}
{{p^\gamma  }}\left( {1 - \frac{{\lambda _1 }}
{p}} \right)^{ - \beta _1 } ...\left( {1 - \frac{{\lambda _n }}
{p}} \right)^{ - \beta _n } ,\quad  \hfill \\
  \operatorname{Re} \left\{ \gamma  \right\} > 0,\operatorname{Re} \left\{ p \right\} > \max \left\{ {0,\operatorname{Re} \left\{ {\lambda _1 } \right\},
  ...,\operatorname{Re} \left\{ {\lambda _n } \right\}} \right\}. \hfill \\
\end{gathered}
\end{equation}
Joining (\ref{eq_corolary1_proof2}), (\ref{eq_corolary1_proof3}) and Lemma \ref{lema1} the proof is completed.
\end{proof}

\vspace{2mm}
It must be noted that the IMGF here obtained has the same functional form as the CDF given in \cite[eq. 6]{Paris2014}; thus, evaluating the IMGF has the same complexity as evaluating the CDF.
\vspace{2mm}

\begin{table*}[h]
  \renewcommand{\arraystretch}{3}
\centering
\caption{Lower IMGFs for the $\kappa$-$\mu$ shadowed fading model and particular cases included therein. Upper IMGFs can be obtained using the MGFs given in \cite{Abdi2003,Ermolova2008,Paris2014} and restated in the table, and then using (\ref{obvious}).}
\label{table01}
\begin{tabular}{|c|c|}
\hline\hline
Fading model & IMGF $\M_{\gamma}^{l}(s,z)$ and MGF $\M_{\gamma}(s)$ \\ \hline\hline
$\kappa$-$\mu$ shadowed & $ \M_{\gamma}^{l}(s,z) =  \frac{{\mu ^{\mu-1}  m^m \left( {1 + \kappa } \right)^\mu  }}
{{\Gamma(\mu)\bar \gamma ^\mu  \left( {\mu \kappa  + m} \right)^m }}z^\mu   \times 
  \Phi _2 \left( {\mu  - m,m;1 + \mu ;\left( {s - \tfrac{{\mu \left( {1 + \kappa } \right)}}
{{\bar \gamma }}} \right)z,\left( {s - \tfrac{{\mu \left( {1 + \kappa } \right)}}
{{\bar \gamma }}\tfrac{m}
{{\mu \kappa  + m}}} \right)z} \right)$ \\ & $ \M_{\gamma}(s) =  \frac{( - \mu)^\mu{ m^m \left( {1 + \kappa } \right)^\mu  }}
{{\bar \gamma ^\mu  \left( {\mu \kappa  + m} \right)^m }}\frac{{\left( {s - \frac{{\mu \left( {1 + \kappa } \right)}}
{{\bar \gamma }}} \right)^{m - \mu } }}
{{\left( {s - \frac{{\mu \left( {1 + \kappa } \right)}}
{{\bar \gamma }}\frac{m}
{{\mu \kappa  + m}}} \right)^m }}$  \\ \hline
 Rician shadowed &  $ \M_{\gamma}^{l}(s,z) =  \frac{ m^m \left( {1 + K } \right) }{{\bar \gamma \left( {K  + m} \right)^m }} z \times 
  \Phi _2 \left( {1  - m,m;2 ;\left( {s - \tfrac{{ \left( {1 + K } \right)}}
{{\bar \gamma }}} \right)z,\left( {s - \tfrac{{ \left( {1 + K } \right)}}
{{\bar \gamma }}\tfrac{m}
{{K  + m}}} \right)z} \right)$  \\ & $ \M_{\gamma}(s) =  - \frac{{ m^m \left( {1 + K } \right) }}
{{\bar \gamma  \left( { K  + m} \right)^m }}\frac{{\left( {s - \frac{{ \left( {1 + K } \right)}}
{{\bar \gamma }}} \right)^{m - 1 } }}
{{\left( {s - \frac{{ \left( {1 + K } \right)}}
{{\bar \gamma }}\frac{m}
{{ K  + m}}} \right)^m }}$  \\ \hline
$\kappa$-$\mu$ &  $\M_{\gamma}^{l}(s,z) = \frac{{{\mu ^\mu }{{\left( {1 + \kappa } \right)}^\mu }}}{{{{\left( {\mu
\left( {1 + \kappa } \right) - \bar \gamma s} \right)}^\mu }}}\exp \left(
{\frac{{\mu \kappa \bar \gamma s}}{{\mu \left( {1 + \kappa } \right) - \bar
\gamma s}}} \right) \times
\left[ {{\rm{1 - }}{Q_\mu }\left( {\sqrt {2\mu \kappa \frac{{\mu \left( {1
+ \kappa } \right)}}{{\mu \left( {1 + \kappa } \right) - \bar \gamma s}}}
,\sqrt {2\left( {\frac{{\mu \left( {1 + \kappa } \right)}}{{\bar \gamma }}
- s} \right)z} } \right)} \right]$
  \\ & $ \M_{\gamma}(s) =  \frac{{\mu ^\mu  \left( {1 + \kappa } \right)^\mu  }}
{{\left( {\mu \left( {1 + \kappa } \right) - \bar \gamma s} \right)^\mu  }}\exp \left( {\frac{{\mu \kappa \bar \gamma s}}
{{\mu \left( {1 + \kappa } \right) - \bar \gamma s}}} \right)$ \\ \hline
$\eta$-$\mu$ & $ \M_{\gamma}^{l}(s,z) =   \frac{\mu ^{2\mu-1} }
{2{\Gamma(2\mu)\bar \gamma ^{2\mu}}} \left(\frac{\left( {1 + \eta } \right)^2}{  \eta} \right)^{\mu} z^{2\mu}   \times 
  \Phi _2 \left( \mu,\mu;2\mu+1 ;\left( {s - \tfrac{{\mu \left( {1 + \eta } \right)}}
{\eta{\bar \gamma }}} \right)z,\left( {s - \tfrac{{\mu \left( {1 + \eta } \right)}}
{{\bar \gamma }}}\right) z\right)$ \\ & $ \M_{\gamma}(s) =  \left(\frac{\mu^2\left(2+\eta^{-1}+\eta\right)}{\left[(1+\eta)\mu-\bar\gamma s\right]\left[(1+\eta^{-1})\mu-\bar\gamma s\right]}\right)^{\mu}$  \\ \hline
\end{tabular}
\end{table*}


\begin{corollary}
\label{corolary2a}
Let $\gamma$ be a Rician shadowed random variable with $E\left\{\gamma\right\}=\bar\gamma$, and non-negative
real shape parameters $K$ and $m$, i.e, ${\gamma\sim \mathcal{S}_{Km}(\gamma;K,m)}$ \cite{Abdi2003}. Then, its lower IMGF is given in the second entry of Table \ref{table01}.
\end{corollary}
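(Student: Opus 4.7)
The cleanest route is to observe that the Rician shadowed distribution arises as the particular case $\mu=1$, $\kappa=K$ of the $\kappa$-$\mu$ shadowed distribution, so the entire result can be obtained as a specialization of Corollary \ref{corolary1} rather than by redoing an inverse Laplace transform from scratch. I would begin by verifying this identification at the level of the MGF: setting $\mu=1$ and $\kappa=K$ in the $\kappa$-$\mu$ shadowed MGF listed in the first row of Table \ref{table01} reproduces exactly the Rician shadowed MGF listed in the second row, which confirms that the parameter mapping is consistent with the definition of $\mathcal{S}_{Km}(\gamma;K,m)$ in \cite{Abdi2003}.

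Once this identification is in place, the plan is simply to substitute $\mu=1$ and $\kappa=K$ into the IMGF expression of Corollary \ref{corolary1}. The prefactor collapses because $\Gamma(\mu)=\Gamma(1)=1$ and $\mu^{\mu-1}=1$, leaving $\frac{m^m(1+K)}{\bar\gamma(K+m)^m}z$. The two shift parameters $a=\mu(1+\kappa)/\bar\gamma$ and $b=\mu(1+\kappa)/\bar\gamma\cdot m/(\mu\kappa+m)$ reduce to $(1+K)/\bar\gamma$ and $(1+K)/\bar\gamma\cdot m/(K+m)$ respectively, while the parameters of $\Phi_2$ become $\mu-m\mapsto 1-m$, $m\mapsto m$, and $1+\mu\mapsto 2$. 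Collecting these substitutions yields the second entry of Table \ref{table01}.

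Since the argument is a direct specialization, there is essentially no obstacle beyond the bookkeeping of the parameter mapping; the only point worth checking carefully is that the convergence condition for the $\Phi_2$ series, inherited from the Laplace pair \cite[pp.~290]{Srivastava1985} invoked in the proof of Corollary \ref{corolary1}, remains satisfied when $\mu=1$. This is immediate because the condition $\operatorname{Re}\{\gamma\}>0$ in (\ref{eq_corolary1_proof3}) becomes $\operatorname{Re}\{1+\mu\}=2>0$, so no additional hypothesis is required and the proof is complete.
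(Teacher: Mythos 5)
Your proof is correct and follows exactly the same route as the paper, which also obtains the Rician shadowed IMGF by specializing the $\kappa$-$\mu$ shadowed result of Corollary~\ref{corolary1} to $\mu=1$ and $\kappa=K$. The extra bookkeeping you carry out (verifying the MGF identification and the $\Phi_2$ convergence condition) is a harmless elaboration of the same argument.
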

\begin{proof}
Specializing the results for the $\kappa$-$\mu$ shadowed distribution for $\mu=1$, the IMGF for the Rician-shadowed case is obtained with $K=\kappa$.
\end{proof}

\vspace{2mm}
Again, the IMGF of the Rician Shadowed distribution has the same functional form as its CDF \cite{Paris2010}. We must note that for the case of $m\in\mathbb{Z}$, we have that $\Phi_2(1-m,m;2;\cdot,\cdot)$ function reduces to a finite sum of Laguerre polynomials \cite[eq. 4]{Paris2010}; similarly, for $m$ being a positive half integer, then $\Phi_2(1-m,m;2;\cdot,\cdot)$ function reduces to a finite sum of Kummer hypergeometric functions, modified Bessel functions and Marcum-$Q$ functions \cite[eq. 7]{Paris2010}.
\vspace{2mm}

\begin{corollary}
\label{corolary2}
Let $\gamma$ be a $\kappa$-$\mu$ random
variable with ${E\left\{\gamma\right\}=\bar\gamma}$ and non-negative
real shape parameters $\kappa$ and $\mu$, i.e, ${\gamma\sim \mathcal{S}_{\kappa\mu}(\gamma;\kappa,\mu)}$ \cite{Yacoub07}. Then, its lower IMGF is given in the third entry of Table \ref{table01}, where $Q_{\mu}$ is the generalized Marcum $Q$ function of $\mu$-th order \cite{AlouiniBook}.
\end{corollary}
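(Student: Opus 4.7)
The plan is to apply Lemma \ref{lema1} directly to the MGF of the $\kappa$-$\mu$ distribution. Using the algebraic identity $\mu\kappa\bar\gamma s/(\mu(1+\kappa)-\bar\gamma s) = -\mu\kappa + \mu\kappa\cdot\mu(1+\kappa)/(\mu(1+\kappa)-\bar\gamma s)$, and setting $a=\mu(1+\kappa)/\bar\gamma$ and $\beta=\mu\kappa$, I first recast the MGF as $\M_\gamma(s) = a^\mu e^{-\beta}(a-s)^{-\mu}\exp(\beta a/(a-s))$. The shifted-and-scaled MGF required by Lemma \ref{lema1} then reads $\M_\gamma(s-p)/p = (a^\mu e^{-\beta}/p)(p+(a-s))^{-\mu}\exp(\beta a/(p+(a-s)))$, which isolates a structure amenable to a standard Laplace inversion.

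The next step is to invoke the known inversion pair $\L^{-1}\{(p+c)^{-\mu}\exp(\lambda/(p+c));p,t\} = \lambda^{(1-\mu)/2}t^{(\mu-1)/2}e^{-ct}I_{\mu-1}(2\sqrt{\lambda t})$, with $c=a-s$ and $\lambda=\beta a$. The extra factor $1/p$ translates, via the Laplace integration property, into an integration from $0$ to $z$ of this expression. I then apply the substitution $t=u^2/(2c)$ and complete the exponent to $(u^2+\alpha^2)/2$ with $\alpha=\sqrt{2\lambda/c}$. The integrand becomes the canonical Marcum-$Q$ kernel $u(u/\alpha)^{\mu-1}\exp(-(u^2+\alpha^2)/2)I_{\mu-1}(\alpha u)$, whose integral from $0$ to $\sqrt{2cz}$ equals $1-Q_\mu(\alpha,\sqrt{2cz})$ by definition. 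Back-substitution yields first argument $\sqrt{2\mu\kappa\cdot\mu(1+\kappa)/(\mu(1+\kappa)-\bar\gamma s)}$ and second argument $\sqrt{2(\mu(1+\kappa)/\bar\gamma-s)z}$, matching the entry in Table \ref{table01}.

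The final step is to collect the prefactors. The constant $e^{-\beta}$ from the MGF reformulation merges with the $e^{\alpha^2/2}$ introduced by normalizing the Marcum-$Q$ kernel, producing exactly $\exp(\mu\kappa\bar\gamma s/(\mu(1+\kappa)-\bar\gamma s))$; meanwhile the surviving powers of $a$, $c$, and $\lambda$ collapse to $\mu^\mu(1+\kappa)^\mu/(\mu(1+\kappa)-\bar\gamma s)^\mu$. I expect the main obstacle to be precisely this bookkeeping: the fractional powers $\lambda^{(1-\mu)/2}$ from the Laplace pair, the Jacobian of the change of variables, and the Marcum-$Q$ normalization constants must all be tracked simultaneously. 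The computation is mechanical once the substitution is identified, but the simultaneous cancellation of prefactors is where errors are easiest to commit.
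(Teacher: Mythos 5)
Your proposal is correct, and it reaches the Table~\ref{table01} entry by a genuinely different route than the paper. The paper first applies the Erd\'elyi transform pair to write the inverse Laplace transform of $\tfrac{1}{p}\M_\gamma(s-p)$ as a bivariate confluent hypergeometric function $\Phi_3(1,\mu+1;\cdot,\cdot)$, and only then converts that $\Phi_3$ into a generalized Marcum $Q$-function via the identity in \cite[eq.~34]{Morales2014}. You instead bypass $\Phi_3$ entirely: after recasting the MGF as $a^\mu e^{-\beta}(a-s)^{-\mu}\exp\bigl(\beta a/(a-s)\bigr)$, you invert $(p+c)^{-\mu}\exp\bigl(\lambda/(p+c)\bigr)$ with the classical Bessel pair, absorb the $1/p$ factor through the definite-integral property, and recognize the resulting integral $\int_0^{\sqrt{2cz}} u(u/\alpha)^{\mu-1}e^{-(u^2+\alpha^2)/2}I_{\mu-1}(\alpha u)\,du$ as $1-Q_\mu(\alpha,\sqrt{2cz})$ directly from the definition of the $\mu$-th order Marcum $Q$ (valid for real $\mu>0$, with the normalization $Q_\mu(\alpha,0)=1$ following from the noncentral chi-square density). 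I checked your prefactor bookkeeping: $a^\mu c^{-\mu}=\mu^\mu(1+\kappa)^\mu/(\mu(1+\kappa)-\bar\gamma s)^\mu$, $e^{-\beta}e^{\alpha^2/2}=\exp\bigl(\mu\kappa\bar\gamma s/(\mu(1+\kappa)-\bar\gamma s)\bigr)$, and both Marcum-$Q$ arguments match the table, so the cancellations you flagged as the main risk do go through. Your argument is arguably more elementary and self-contained (it needs only the Bessel Laplace pair and the Marcum-$Q$ definition), whereas the paper's detour through $\Phi_3$ keeps the derivation structurally parallel to the $\kappa$-$\mu$ shadowed and $\eta$-$\mu$ cases, where the answer genuinely lives in the $\Phi_2$/$\Phi_3$ family and no Marcum-$Q$ reduction is available.
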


\begin{proof}
Using the MGF of a $\kappa$-$\mu$ distributed RV $\gamma$ in \cite{Ermolova2008}, and according to Lemma \ref{lema1}, the lower IMGF can be expressed as
\begin{align}
\label{eq_corolary2_proof1}
&{\M}_{\gamma}^l \left( {s,z} \right) = \L^{-1}\left\{\tfrac{1}{p}\M_{\gamma}(s-p);p,z\right\}\nonumber\\&={\L^{ - 1}}\left\{ {\tfrac{1}{p}\tfrac{{{\mu ^\mu }{{\left( {1 + \kappa }
\right)}^\mu }}}{{{{\left( {\mu \left( {1 + \kappa } \right) - \bar \gamma
s + \bar \gamma p} \right)}^\mu }}}\exp \left( {\tfrac{{\mu \kappa \bar
\gamma s - \mu \kappa \bar \gamma p}}{{\mu \left( {1 + \kappa } \right) -
\bar \gamma s + \bar \gamma p}}} \right);p,z} \right\},
\end{align}
After some algebra, the terms in (\ref{eq_corolary2_proof1}) can be conveniently rearranged so the inverse Laplace transform has the following form:
\begin{align}
\label{eq_corolary2_proof2}
\begin{array}{l}
{\M}_{\gamma}^l \left( {s,z} \right) = \exp \left( { - \mu \kappa } \right){\left( {\frac{{\mu \left( {1 + \kappa
} \right)}}{{\bar \gamma }}} \right)^\mu }\exp \left( {\left( {s -
\frac{{\mu \left( {1 + \kappa } \right)}}{{\bar \gamma }}} \right)z}
\right) \times \\
{\L^{ - 1}}\left\{ {\frac{1}{{{p^{\mu  + 1}}}}\frac{1}{{1 - \left(
{\frac{{\mu \left( {1 + \kappa } \right)}}{{\bar \gamma }} - s}
\right)/p}}\exp \left( {\frac{1}{p}{{\frac{{{\mu ^2}\kappa \left( {1 + \kappa }
\right)}}{{\bar \gamma }}}}} \right);p,z} \right\},
\end{array}
\end{align}
This inverse Laplace transform can be calculated in terms of the bivariate confluent hypergeometric function $\Phi_3(\cdot)$ defined in \cite[eq. 9.261.3]{Gradstein2007} by using the transform pair in \cite[4.24.9]{Erdelyi1954}. This yields 
\begin{align}
\label{eqPepe}
\begin{array}{l} {\M}_{\gamma}^l \left( {s,z} \right) =  \frac{\mu^{\mu}(1+\kappa)^{\mu}z^{\mu}}{\Gamma(\mu+1)\bar\gamma^{\mu}\exp{(\kappa\mu)}}
\exp \left( {\left( {s -
\frac{{\mu \left( {1 + \kappa } \right)}}{{\bar \gamma }}} \right)z}
\right) \\ \times {\Phi _3}\left(
{1,\mu  + 1;\left( {\frac{{\mu \left( {1 + \kappa } \right)}}{{\bar \gamma
}} - s} \right)z,\frac{{{\mu ^2}\kappa \left( {1 + \kappa } \right)}}{{\bar
\gamma }}z} \right),
\end{array}
\end{align}

Finally, we make use of the existing connection between the $\Phi_3(1,\mu+1;a;b)$ function and the generalized Marcum $Q$-function, which holds for $\mu\in\mathbb{R}$. Adapting \cite[eq. 34]{Morales2014} to the specific case here addressed, we have
\begin{align}
\frac{\Phi_3\left(1,\mu+1;a,b\right)}{\Gamma(\mu+1)}=\exp\left(a+\tfrac{b}{a}\right)a^{-\mu}\left[1-Q_{\mu}\left(\sqrt{2\tfrac{b}{a}},\sqrt{2a}\right)\right].
\end{align}
Combining this equation with (\ref{eqPepe}) yields the final expression given in the third entry of \mbox{Table \ref{table01}}. This completes the proof.
\end{proof}

\vspace{2mm}
Note that the IMGF is given in terms of the well-known Marcum $Q$-function, just like the CDF of the $\kappa$-$\mu$ distribution originally derived by Yacoub \cite{Yacoub07}.

\vspace{2mm}

\begin{corollary}
\label{corolary3}
Let $\gamma$ be a $\eta$-$\mu$ random
variable with ${E\left\{\gamma\right\}=\bar\gamma}$ and non-negative
real shape parameters $\eta$ and $\mu$, i.e, ${\gamma\sim \mathcal{S}_{\eta\mu}(\gamma;\eta,\mu)}$ \cite{Yacoub07}. Then, its lower IMGF is given in the fourth entry of Table \ref{table01}.
\end{corollary}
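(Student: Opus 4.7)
The plan is to exploit the fact that the $\eta$-$\mu$ distribution appears as a special case of the $\kappa$-$\mu$ shadowed distribution, so that Corollary~\ref{corolary1} applies directly. Matching the $\eta$-$\mu$ MGF (fourth row of Table~\ref{table01}) against the $\kappa$-$\mu$ shadowed MGF in (\ref{eq_corolary1_proof1}), the identification
\begin{align*}
\mu_{\mathrm{sh}} = 2\mu, \quad m_{\mathrm{sh}} = \mu, \quad \kappa_{\mathrm{sh}} = \tfrac{\eta-1}{2}
\end{align*}
should reproduce the $\eta$-$\mu$ MGF exactly: under this substitution the two pole locations of the $\kappa$-$\mu$ shadowed MGF collapse to $a=\mu(1+\eta)/\bar\gamma$ and $b=\mu(1+\eta^{-1})/\bar\gamma$, which are precisely the roots of the quadratic denominator $[(1+\eta)\mu-\bar\gamma s][(1+\eta^{-1})\mu-\bar\gamma s]$ appearing in the $\eta$-$\mu$ MGF.

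Once this embedding is in place, I would simply substitute those three parameter values into the first-row IMGF of Table~\ref{table01}. The $\Phi_2$ arguments fall out immediately: both upper parameters become $\mu_{\mathrm{sh}}-m_{\mathrm{sh}}=m_{\mathrm{sh}}=\mu$, the lower parameter becomes $1+\mu_{\mathrm{sh}}=2\mu+1$, and the two shifted variables match the ones appearing in the target expression. The only slightly delicate step is collecting the prefactor, where I would use $1+\kappa_{\mathrm{sh}}=(1+\eta)/2$ and $\mu_{\mathrm{sh}}\kappa_{\mathrm{sh}}+m_{\mathrm{sh}}=\mu\eta$ to handle the various powers, together with the rewritings $(1+\eta)^{2\mu}/\eta^\mu=((1+\eta)^2/\eta)^\mu$ and $(2\mu)^{2\mu-1}/2^{2\mu}=\mu^{2\mu-1}/2$. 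These are purely algebraic reductions that land on the target prefactor in the fourth row.

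The main obstacle is bookkeeping rather than analysis: one must carefully verify that the three parameter assignments truly reproduce the $\eta$-$\mu$ MGF (in particular that the formal $(-1)^{\mu_{\mathrm{sh}}}$ sign factor and the $(\mu\kappa+m)^m$ normalization combine correctly), and then not slip during the prefactor simplification. No new integrals or transforms are needed beyond those already used in Corollary~\ref{corolary1}. As a fallback that avoids any appeal to the embedding, one can mirror that earlier proof from scratch: factor the $\eta$-$\mu$ MGF as $C\,[(s-a)(s-b)]^{-\mu}$ with the same $a,b$ as above, form the scaled-shifted MGF $(1/p)\M_\gamma(s-p)$, rewrite it as
\begin{align*}
\frac{C}{p^{\,2\mu+1}}\left(1-\tfrac{s-a}{p}\right)^{-\mu}\left(1-\tfrac{s-b}{p}\right)^{-\mu},
\end{align*}
and invoke Lemma~\ref{lema1} together with the Laplace transform pair (\ref{eq_corolary1_proof3}) with $n=2$, $\beta_1=\beta_2=\mu$ and $\gamma=2\mu+1$. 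Either path produces the stated IMGF.
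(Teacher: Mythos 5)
Your proposal is correct and follows essentially the same route as the paper: the published proof is exactly the one-line specialization of Corollary~\ref{corolary1} via the $\kappa$-$\mu$ shadowed embedding with $\underline{\mu}=2\mu$, $m=\mu$ and $\kappa=(1-\eta)/(2\eta)$. Your identification $\kappa_{\mathrm{sh}}=(\eta-1)/2$ is that same mapping applied to $1/\eta$, and since format~1 of the $\eta$-$\mu$ model is invariant under $\eta\leftrightarrow 1/\eta$ (and here $\mu_{\mathrm{sh}}-m_{\mathrm{sh}}=m_{\mathrm{sh}}=\mu$, so the two $\Phi_2$ arguments may be swapped), both choices land on the identical fourth-row expression.
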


\begin{proof}
Leveraging the recent connection between the $\kappa$-$\mu$ shadowed distribution and the $\eta$-$\mu$ distribution \cite{Laureano2015}, the IMGF of the $\eta$-$\mu$ power envelope in format 1 is obtained\footnote{According to the original definition in \cite{Yacoub07}, this format implies that $\eta\in[0,\infty)$} by setting the parameters of the $\kappa$-$\underline\mu$ shadowed distribution to $\underline \mu = 2\mu$, $\kappa=\frac{1-\eta}{2\eta}$ and $m=\mu$.
\end{proof}

\vspace{2mm}
This expression is coincident with the one obtained in \cite{Morales2014}, and also has the same complexity as the original CDF for the $\eta$-$\mu$ distribution derived in \cite{Morales2010} .
\vspace{2mm}

\section{Application to Physical Layer Security}
\label{analysis}
\subsection{Single-antenna scenario}
Aided by the previous mathematical results, we will now show how the IMGF can be used to directly analyze the performance in wireless scenarios. Specifically, we aim to determine the physical layer security of a wireless link between two legitimate peers (Alice and Bob) in the presence of an external eavesdropper (Eve) \cite{Barros2006,Bloch2008}. 
We first consider that Bob, Eve and Alice are equipped with single-antenna devices; the inclusion of multiple antennas at Bob or Eve will be later discussed in this section. Depending on the characteristics of the propagation environment, random fluctuations affecting the desired and wiretap links need to be modeled with a specific distribution.



The performance in this scenario can be characterized by the secrecy capacity $C_S$, defined as
\begin{equation}
\label{sec01}
C_S\triangleq C_b-C_e = \log_2\left({\frac{1+\gamma_b}{1+\gamma_e}}\right)>0,
\end{equation}
where $\gamma_b$ and $\gamma_e$ are the instantaneous SNRs at Bob and Eve, respectively, and $C_b$ and $C_e$ denote the capacities of the communication links between Alice and Bob, and between Alice and Eve, respectively. 



In many cases, Eve's channel state information (CSI) is not available at Alice and hence {information-theoretic security cannot be guaranteed}. This may be the case on which Eve is a passive eavesdropper, and hence Alice has no way to have access to Eve's CSI. Conversely, perfect knowledge of Bob's CSI by Alice can be assumed. Thus, Alice selects a constant secrecy rate $R_S$ for transmission; in this situation, the outage probability of the secrecy capacity (OPSC) gives the probability that communication at a certain secrecy rate ${R_S > 0}$ cannot be securely attained. This metric is computed as follows
\begin{align}
\label{eq_analysis_1}
\mathcal{P}_{R_S}\triangleq\Pr \left\{ {C_S  \leqslant R_S } \right\}&=\Pr\left\{ {\gamma_b  \leqslant \left(2^{R_S}-1\right)\left(\frac{2^{R_S}}{2^{R_S}-1}\gamma_e+1\right) } \right\}\\
&= 1 - \Pr \left\{ {C_S  > R_S } \right\}.
\end{align}
The probability of strictly positive secrecy capacity can be obtained as a particular case of (\ref{eq_analysis_1}) by setting ${R_S=0}$.

Finally, another secrecy performance metric of interest is the $\epsilon$-outage secrecy capacity $C_{\epsilon}$. This is defined as the largest secrecy rate $R_S$ for which the OPSC satisfies ($\mathcal{P}_{R_S}{\leq\epsilon}$), with $0\leq\epsilon\leq 1$. For a certain $\epsilon$, this metric is computed as
\begin{equation}
C_{\epsilon}\triangleq \underset{R_S:\mathcal{P}_{R_S}\leq\epsilon}{\sup}  \left\{R_S\right\},
\end{equation}

{We will first assume that the fading experienced by the eavesdropper can be modeled by the $\kappa$-$\mu$ shadowed distribution, i.e. ${\gamma_e\sim \mathcal{S}_{\kappa\mu m}(\gamma_e;\kappa,\mu,m)}$. This distribution \cite{Paris2014} is well suited to model both line-of sight (LOS) and non-LOS (NLOS) scenarios, and also includes most popular fading distributions in the literature as special cases. We will also consider that the fading severity parameters $\mu$ and $m$ take integer values\footnote{This can be justified as follows: the parameter $\mu$ in the $\kappa$-$\mu$ distribution introduced by Yacoub \cite{Yacoub07} was defined as the number of clusters of multipath waves propagating in a certain environment; thus, according to this definition the consideration of integer $\mu$ is related to the physical model for the $\kappa$-$\mu$ distribution. Equivalently, the restriction of $m$ to take integer values does not have a major impact unless the LOS component is affected by heavy shadowing (i.e. very low values of $m$). In practice, this restriction has a negligible effect, and specially when the $\kappa$-$\mu$ shadowed distribution is used to approximate the $\kappa$-$\mu$ distribution in a more tractable form \cite{Lopez2016}.}, which allows for a simpler mathematical tractability. 


With this only restriction, and for \textit{any arbitrary} fading distribution at the legitimate link between Alice and Bob, the OPSC can be computed using the following Lemma:
\vspace{3mm}
\begin{lemma}
\label{lema3}
Let us consider the communication between two legitimate peers A and B in the presence of an external eavesdropper E. Let  $\gamma_b$ and $\gamma_e$ be the instantaneous SNRs at B and E, respectively, and $\bar\gamma_b$ and $\bar\gamma_e$ the average SNRs at B and E, respectively. If ${\gamma_e\sim \mathcal{S}_{\kappa\mu m}(\gamma_e;\kappa_e,\mu_e,m_e)}$ with $\{\mu,m\}\in\mathbb{Z}^+$, then for any finite rate $R_S\geq0$ the OPSC can be expressed in terms of the IMGF of $\gamma_b$ as
\begin{align}
\label{eqProb}
\mathcal{P}_{R_S}&=\sum_{i=0}^{M}C_i e^{\alpha\beta_i}\sum_{r=0}^{m_i-1}\sum_{k=0}^{r}   \tfrac{(-\alpha\beta_i)^{r}}{k!(r-k)!}  \tfrac{\partial^k \M_{\gamma_b}^{u}\left( s,z\right)}{\partial s^k}\bigg{|}_{\stackrel{s=-\beta_i}{z=\alpha}}+\M_{\gamma_b}^{l}(0,2^{R_S}-1)\nonumber\\
&=\sum_{i=0}^{M}C_i e^{\alpha\beta_i}\sum_{r=0}^{m_i-1}\sum_{k=0}^{r}   \tfrac{(-\alpha\beta_i)^{r}}{k!(r-k)!}  \tfrac{\partial^k \M_{\gamma_b}^{u}\left( s,z\right)}{\partial s^k}\bigg{|}_{\stackrel{s=-\beta_i}{z=\alpha}}+F_{\gamma_b}\left(\alpha\right).
\end{align}
where $F_{\gamma_b}(\cdot)$ is the CDF of $\gamma_b$, $\alpha=2^{R_S}-1$, $\beta_i=1/(2^{R_S}\Omega_i)$, and the parameters $M$, $m_i$, $\Omega_i$ and $C_i$ are related to $\kappa$, $\mu$, $m$ and $\bar\gamma_e$ as described in Table \ref{table02} in Appendix \ref{app2}.
\end{lemma}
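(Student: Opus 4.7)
The plan is to peel $\gamma_e$ out of the event $\{\gamma_b \le 2^{R_S}\gamma_e+\alpha\}$ by conditioning, use the integer-shape assumption on $(\mu_e,m_e)$ to decompose $f_{\gamma_e}$ into a finite Erlang mixture, and then recognise that each residual integral over $x>\alpha$ is nothing but a derivative of the upper IMGF of $\gamma_b$ at $s=-\beta_i$, $z=\alpha$.

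First, rewrite the OPSC as $\mathcal{P}_{R_S}=\int_0^\infty F_{\gamma_b}(2^{R_S}y+\alpha)f_{\gamma_e}(y)\,dy$ with $\alpha=2^{R_S}-1$. Splitting the integrand via $F_{\gamma_b}(2^{R_S}y+\alpha)=F_{\gamma_b}(\alpha)+[F_{\gamma_b}(2^{R_S}y+\alpha)-F_{\gamma_b}(\alpha)]$ and swapping the order of integration in the remainder term gives the clean starting expression
\begin{equation*}
\mathcal{P}_{R_S}=F_{\gamma_b}(\alpha)+\int_\alpha^\infty f_{\gamma_b}(x)\,\bar F_{\gamma_e}\!\left(\tfrac{x-\alpha}{2^{R_S}}\right)dx,
\end{equation*}
and the first term is precisely $\M_{\gamma_b}^l(0,\alpha)$ by Definition \ref{IMGF}, which accounts for the non-IMGF summand in the statement.

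Second, because $\mu_e,m_e\in\mathbb{Z}^+$, the $\kappa$-$\mu$ shadowed MGF in Corollary \ref{corolary1} is a proper rational function whose poles at $s=a$ and $s=b$ have integer orders. Its partial-fraction expansion produces, via inverse Laplace transform, a finite mixture of Erlang PDFs indexed by $i=0,\ldots,M$ with shape $m_i$, scale $\Omega_i$ and weight $C_i$ (the bookkeeping of which is deferred to Table \ref{table02} in Appendix \ref{app2}). The corresponding CCDF is the standard Erlang tail, $\bar F_{\gamma_e}(y)=\sum_i C_i e^{-y/\Omega_i}\sum_{r=0}^{m_i-1}(y/\Omega_i)^r/r!$. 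Setting $y=(x-\alpha)/2^{R_S}$ and $\beta_i=1/(2^{R_S}\Omega_i)$ makes $y/\Omega_i=(x-\alpha)\beta_i$, so the remaining integral becomes a finite linear combination of terms of the form $\int_\alpha^\infty (x-\alpha)^r e^{-(x-\alpha)\beta_i} f_{\gamma_b}(x)\,dx$, with an overall prefactor $e^{\alpha\beta_i}$ that arises from $e^{-(x-\alpha)\beta_i}=e^{\alpha\beta_i}e^{-\beta_i x}$.

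Finally, a binomial expansion of $(x-\alpha)^r$ reduces every such integral to a weighted sum of $\int_\alpha^\infty x^k e^{-\beta_i x} f_{\gamma_b}(x)\,dx$ for $k=0,\ldots,r$; by differentiating under the integral in Definition \ref{IMGFb}, each of these is exactly $\partial^k \M_{\gamma_b}^u(s,z)/\partial s^k$ evaluated at $s=-\beta_i$, $z=\alpha$. Combining $\binom{r}{k}/r!=1/[k!(r-k)!]$ with the binomial coefficients yields the announced double sum. The only genuinely delicate step is Step 2: the partial-fraction decomposition has to be carried out consistently across the regimes $\mu_e>m_e$, $\mu_e=m_e$ and $\mu_e<m_e$ so that the list $\{(C_i,m_i,\Omega_i)\}_{i=0}^M$ is complete and normalised (in particular $\sum_i C_i\Omega_i^{m_i}(m_i-1)!=1$ follows from $\int f_{\gamma_e}=1$), and that tabulation is what the appendix is responsible for; everything else is routine binomial/IMGF algebra.
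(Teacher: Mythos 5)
Your proposal is correct and follows essentially the same route as the paper's Appendix~A: reduce the OPSC to $F_{\gamma_b}(\alpha)$ plus an integral of $f_{\gamma_b}$ against the eavesdropper's survival function over $[\alpha,\infty)$, expand that survival function as the finite gamma/Erlang mixture valid for integer $\mu_e,m_e$ (the paper imports this from the CDF expression in Appendix~B rather than re-deriving it from the MGF's partial fractions, but the content is identical), then binomially expand and identify each $\int_\alpha^\infty x^k e^{-\beta_i x}f_{\gamma_b}(x)\,dx$ as the $k$-th $s$-derivative of the upper IMGF. The only blemish is your parenthetical normalisation identity, which should read $\sum_i C_i=1$ (evaluate the mixture CCDF at zero); this does not affect the argument.
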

\vspace{3mm}
\begin{proof}
See Appendix \ref{app1}.
\end{proof}
\vspace{3mm}

Expression (\ref{eqfinal01}) yields the OPSC in any scenario on which the eavesdropper's fading channel can be modeled with the $\kappa$-$\mu$ shadowed distribution, for \textit{any arbitrary} choice of the fading distribution for the legitimate channel. Thence, we can use the results in Table \ref{table01} combined with (\ref{obvious}) to derive analytical expressions for the secrecy performance in those scenarios on which the fading at the desired link can be modeled by the general $\kappa$-$\mu$ shadowed distribution, or any of the particular cases included therein. Note that there is no need to restrict the parameters $\mu$ and $m$ of the legitimate channel to take integer values, as the IMGF derived in Table \ref{table01} holds for any $\{\mu,m\}\in\mathbb{R}$.

In some scenarios, Eve may only have access to signals arriving from NLOS paths \cite{Maleki2014,Classen2015}. Under this premise, we can assume that the eavesdropper link can be modeled by the Rayleigh distribution. This leads $\gamma_e$ to be exponentially distributed with average SNR $\bar\gamma_e$, i.e. $\gamma_e\sim\text{Exp}{(\bar\gamma_e)}$. Thus, the following corollary arises as a special case of Lemma 2.

\vspace{3mm}
\begin{corollary}
\label{coroR1}
Let us consider the communication between two legitimate peers A and B in the presence of an external eavesdropper E. Let  $\gamma_b$ and $\gamma_e$ be the instantaneous SNRs at B and E, respectively, and $\bar\gamma_b$ and $\bar\gamma_e$ the average SNRs at B and E, respectively. If ${\gamma_e\sim\text{Exp}{(\bar\gamma_e)}}$, then for any finite rate $R_S\geq0$ the OPSC can be expressed in terms of the IMGF of $\gamma_b$ as
\begin{align}
\label{eqProb2}
\mathcal{P}_{R_S}=&\M_{\gamma_b}^{l}(0,2^{R_S}-1)+e^{\frac{2^{R_S}-1}{2^{R_S}\bar\gamma_e}}\M_{\gamma_b}^{u}\left(-\frac{1}{2^{R_S}\bar\gamma_e},2^{R_S}-1\right)\nonumber\\
=&F_{\gamma_b}\left(2^{R_S}-1\right)+e^{\frac{2^{R_S}-1}{2^{R_S}\bar\gamma_e}}\M_{\gamma_b}^{u}\left(-\frac{1}{2^{R_S}\bar\gamma_e},2^{R_S}-1\right),
\end{align}
where $F_{\gamma_b}(\cdot)$ is the CDF of $\gamma_b$.
\end{corollary}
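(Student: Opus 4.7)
The plan is to proceed directly from the definition of the OPSC in (\ref{eq_analysis_1}) by conditioning on $\gamma_e$ and exploiting the exponential density of $\gamma_e$, rather than specializing the general decomposition provided by Lemma \ref{lema3}. First I would simplify the event inside the probability by noting that $(2^{R_S}-1)\bigl(\tfrac{2^{R_S}}{2^{R_S}-1}\gamma_e+1\bigr)=2^{R_S}\gamma_e+\alpha$ with $\alpha=2^{R_S}-1$, so that
\begin{equation}
\mathcal{P}_{R_S}=\int_0^\infty F_{\gamma_b}\bigl(2^{R_S}x+\alpha\bigr)\,f_{\gamma_e}(x)\,dx,
\end{equation}
where $f_{\gamma_e}(x)=\bar\gamma_e^{-1}e^{-x/\bar\gamma_e}$ for $x\geq 0$.

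Next I would apply integration by parts with $u=F_{\gamma_b}(2^{R_S}x+\alpha)$ and $dv=\bar\gamma_e^{-1}e^{-x/\bar\gamma_e}\,dx$, giving $v=-e^{-x/\bar\gamma_e}$ and $du=2^{R_S}f_{\gamma_b}(2^{R_S}x+\alpha)\,dx$. The boundary term vanishes at $x\to\infty$ since $F_{\gamma_b}\to 1$ while the exponential decays, and it contributes $F_{\gamma_b}(\alpha)=\M_{\gamma_b}^{l}(0,\alpha)$ at $x=0$, thereby isolating the first summand in (\ref{eqProb2}). The surviving piece of the integration-by-parts identity is
\begin{equation}
2^{R_S}\int_0^\infty f_{\gamma_b}\bigl(2^{R_S}x+\alpha\bigr)\,e^{-x/\bar\gamma_e}\,dx.
\end{equation}

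Finally I would perform the affine change of variable $y=2^{R_S}x+\alpha$ to rewrite the above as $\int_\alpha^\infty f_{\gamma_b}(y)\,e^{-(y-\alpha)/(2^{R_S}\bar\gamma_e)}\,dy$, and then pull the constant $e^{\alpha/(2^{R_S}\bar\gamma_e)}$ out of the integral so that the remaining integrand matches the definition of the upper IMGF in (\ref{def1b}), producing exactly $\M_{\gamma_b}^{u}\bigl(-1/(2^{R_S}\bar\gamma_e),\alpha\bigr)$. Assembling the two contributions and substituting back $\alpha=2^{R_S}-1$ recovers the claimed identity. No step is genuinely delicate; the only point that demands care is tracking the exponential shift generated by the change of variable and getting the sign right on the Laplace argument, so that $e^{sy}$ in the definition of $\M_{\gamma_b}^{u}$ aligns with $e^{-y/(2^{R_S}\bar\gamma_e)}$ in the integrand.
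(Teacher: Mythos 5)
Your proof is correct, and it reaches the result by a route that differs in its mechanics from the paper's. The paper proves this corollary by specializing the derivation of Lemma \ref{lema3} (Appendix \ref{app1}): it conditions on $\gamma_b$, writes $\Pr\{C_S>R_S\}=\int_{\alpha}^{\infty}f_{\gamma_b}(x)\,F_{\gamma_e}\bigl(\tfrac{1+x}{2^{R_S}}-1\bigr)\,dx$ after first observing that the conditional probability vanishes for $\gamma_b<\alpha$, and then inserts the exponential CDF $F_{\gamma_e}(y)=1-e^{-y/\bar\gamma_e}$ so that the upper IMGF of $\gamma_b$ appears immediately, with the sums in (\ref{eqfinal01}) collapsing to a single term and no $s$-derivatives. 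You instead condition on $\gamma_e$, obtaining $\int_0^\infty F_{\gamma_b}(2^{R_S}x+\alpha)f_{\gamma_e}(x)\,dx$, and recover the density of $\gamma_b$ via integration by parts; the term $F_{\gamma_b}(\alpha)=\M_{\gamma_b}^{l}(0,\alpha)$ then emerges as the boundary contribution rather than from the truncation of the integration region. Both arguments are elementary manipulations of the same double integral with the order of integration swapped. The paper's route has the advantage of being the literal $\kappa=0$, $\mu=1$ specialization of the general lemma (so nothing new needs to be checked), while yours is self-contained, avoids invoking the mixture-of-gammas machinery of Appendix \ref{app2} entirely, and makes transparent why the exponential law of $\gamma_e$ is exactly what turns the $\gamma_b$-integral into a single evaluation of $\M_{\gamma_b}^{u}$ at $s=-1/(2^{R_S}\bar\gamma_e)$. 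All steps check out: the boundary term at infinity vanishes since $F_{\gamma_b}\le 1$, the affine substitution and the sign of the Laplace argument are handled correctly, and the implicit assumptions (independence of $\gamma_b$ and $\gamma_e$, existence of the density $f_{\gamma_b}$) are the same ones the paper makes.
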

\vspace{3mm}
\begin{proof}
Following the same derivation in Appendix \ref{app1} and setting $\kappa=0$ and $\mu=1$ yields the desired result.
\end{proof}
\vspace{3mm}

Note that these results provide a systematic way to derive the OPSC for any arbitrary fading distribution in the legitimate link, provided that the IMGF of the SNR at Bob is known. We must also note the OPSC for $\bar\gamma_b\gg\bar\gamma_e$ does not depend on the distribution of $\gamma_e$, but only on the distribution of $\gamma_b$ and the average SNR at Eve $\bar\gamma_e$ \cite{Romero2014}. Thus, the consideration of Rayleigh fading for the eavesdropper link as performed in Corollary 5 has a negligible effect in practice, while simplifying the analysis.

If now assuming Rayleigh fading also for the legitimate channel, the OPSC expression in (\ref{eqProb2}) reduces to the one originally calculated in \cite{Barros2006}. This can be checked by setting $\kappa=0$ and $\mu=1$ in the IMGF of the $\kappa$-$\mu$ distribution in the third entry of Table \ref{table01}. Using the equivalence $Q_1(0,\sqrt{2x})=e^{-x}$ given in \cite[eq. 4.45]{AlouiniBook}, and after some manipulations we obtain
\begin{align}
\label{eqBarros}
\mathcal{P}_{R_S}=1-e^{-\frac{2^{R_S}-1}{\bar\gamma_b}}\frac{\bar\gamma_b}{\bar\gamma_b+2^{R_S}\bar\gamma_e}.
\end{align}

}
\subsection{Extension to the multi-antenna scenario}
\label{app2}

In the previous analysis, we have explicitly assumed that all the agents in the system are equipped with single-antenna devices. We here show that the extension for the case on which Eve and Bob are multi-antenna devices can be straightforwardly carried out. 

First, note that in the derivation carried out in Appendix \ref{app1} in order to prove Lemma 2, there is neither any restriction related to the number of antennas used by Bob, nor related to the combining strategy carried out. Hence, Lemma 2 can be applied as is for a multi-antenna configuration at Bob. In this case, the only requirement to derive the OPSC is to determine the IMGF of the SNR after combining. For instance, if maximal ratio combining (MRC) is used by Bob we have that $\gamma_b=\sum_{i=1}^{N_B}{\gamma_b}_i$ and
\begin{align}
\M_{\gamma_b}^{u}\left(s,z\right)=\prod_{i=1}^{N_B}\M_{{\gamma_b}_i}^{u}\left(s_i,z_i\right),
\end{align}
where $N_B$ is the number of receive antennas at Bob, ${{\gamma_b}_i}$ denote the per-branch instantaneous SNRs, and independence between receive branches has been considered.

When assuming multiple antennas at Eve with i.i.d. branches under {$\kappa$-$\mu$ shadowed fading, the extension is also straightforward when considering that Eve performs MRC reception in order to maximize the receive SNR; since the sum of $N_E$ i.i.d. $\kappa$-$\mu$ shadowed random variables is also $\kappa$-$\mu$ shadowed-distributed with $\mu_{eq}=N_E\cdot{\mu}$, $m_{eq}=N_E\cdot{m}$ and $\gamma_e=N_E\cdot{\gamma_e}$, the OPSC in the multiantenna scenario can be expressed as in Lemma 2.}

{
\section{Other Applications}
\label{analysis2}

\subsection{Outage probability analysis with interference and background noise}

The performance characterization of wireless communication systems in the presence of interference is a very important problem in communication theory, ever since the advent of digital cellular systems, whose performance is known to be limited by the interference received from nearby cells. Let us denote as $\gamma_d$ the instantaneous SNR at the intended receiver, and let us denote as $\gamma_i$ the aggregate instantaneous interference-to-noise ratio corresponding to the set of interfering signals affecting the receiver.

The outage probability in this scenario, defined as the probability that the signal-to-noise plus interference ratio is below a given threshold $\gamma_{\text{th}}$,  can be calculated as 
\begin{align}
\label{OPNI}
\text{OP}_{\text{NI}}=\Pr\left\{ \gamma_d  \leqslant \gamma_{\text{th}}\left(\gamma_i+1\right) \right\},
\end{align}

As pointed out in \cite{Gomez2015}, expressions (\ref{OPNI}) and (\ref{eq_analysis_1}) are formally equivalent, up to some scaling of the random variables and setting $\gamma_{\text{th}}=2^{R_S}-1$. This means that both $\text{OP}_{\text{NI}}$ and $\mathcal{P}_{R_S}$ will have the same functional form for the same choice of distributions.

The derivation of the outage probability of communication systems in the presence of interference and background noise in a $\kappa$-$\mu$ shadowed/$\kappa$-$\mu$ shadowed scenario, which is an open problem in the literature, is directly obtained by using the IMGF given in the first entry of Table I, under the same conditions assumed in Section \ref{analysis}. Recent results in the literature arise as special cases \cite{Ermolova2014}.
%
%

\subsection{Channel capacity with side information at the transmitter and the receiver}

Let us now consider the scenario on which a transmitter, subject to an average transmit power constraint, communicates with a receiver through a fading channel. Assuming perfect channel knowledge at both the transmitter and receiver sides, the transmitter can optimally adapt its power and rate. The Shannon capacity in this scenario is known to be given by the following expression \cite{Goldsmith1997},

\begin{equation}
\label{cap}
C= \int_{0}^{\infty}\log\left(\frac{\gamma}{\gamma_0}\right)f_{\gamma}(\gamma)d\gamma,
\end{equation}
where a normalized bandwidth $B=1$ was assumed for simplicity. In (\ref{cap}), $\gamma$ is the instantaneous SNR and $\gamma_0$ is a cut-off SNR determined by the average power constraint. An alternative expression for this capacity was proposed in \cite{DiRenzo2010} in terms of the $E_i$-transform, which makes use of the exponential integral function $E_i(\cdot)$ as integration kernel, yielding

\begin{equation}
\label{cap2}
C= \frac{1}{\log(2)} \int_{0}^{\infty}E_i(-x)\exp(x)\Psi(x,\gamma_0)dx,
\end{equation}
where the ancillary function $\Psi(x,\gamma_0)$ is defined as
\begin{equation}
\label{cap3}
\Psi(x,\gamma_0)\triangleq M_{\gamma}^{u}\left(\frac{x}{\gamma_0},x\right) + \frac{1}{\gamma_0} \frac{\partial \M_{\gamma}^{u}\left( s,z\right)}{\partial s}\bigg{|}_{\stackrel{s=\frac{x}{\gamma_0}}{z=x}}.
\end{equation}

Thus, equation (\ref{cap2}) provides an alternative way of computing the capacity in this scenario for an arbitrary distribution of the SNR, in terms of the IMGF and its first derivative. Using the expressions for the IMGF derived in Table I, capacity results are obtained for the $\kappa$-$\mu$ shadowed fading channels, and all the special cases included therein. These results are also new in the literature.

\subsection{Average bit-error rate with adaptive modulation}

Adaptive modulation makes use of channel knowledge at the transmitter side in order to optimally design system parameters such as constellation size, transmit power, coding rates and schemes, and many others \cite{Gold05}. One extended alternative is the design of the constellation size and power in order to maximize the average throughput, for a certain instantaneous bit-error rate (BER) constraint. In this scenario, the average BER $\bar P_b$ of adaptive modulation with $M$-QAM is well approximated using \cite[eq. 9.7]{Gold05} and \mbox{\cite[eq. 9.72]{Gold05}}, as
\begin{equation}
\label{ABER}
\bar P_b  \approx \frac{{\sum\limits_{j = 1}^{N - 1} {k_j \int_{\gamma _{j - 1} }^{\gamma _j } {0.2\exp \left( { - 1.5\frac{{\gamma }}
{{2^{k_j }  - 1}}} \right)f_{\gamma}\left( \gamma  \right)d\gamma } } }}
{{\sum\limits_{j = 1}^{N - 1} {k_j \int_{\gamma _{j - 1} }^{\gamma _j } {f_{\gamma}\left( \gamma  \right)d\gamma } } }},
\end{equation}
where $\gamma$ represents the instantaneous SNR, $\bar\gamma$ is the average SNR, $N$ is the number of fading regions, $\{\gamma_j\}$ are the SNR switching thresholds and $k_j$ is the number of bits per complex symbol employed when $\gamma_{j-1}\leq\gamma<\gamma_j$. Note that the denominator in (\ref{ABER}) represents the exact average spectral efficiency and, for convenience, $\gamma_{N-1}\triangleq \infty$. Using the IMGF of $\gamma$, the following closed-form expression is obtained as
\begin{align}
\label{ABER2}
\bar P_b  &\approx \frac{{0.2\sum\limits_{j = 1}^{N - 1} {k_j \left\{ {\M_{\gamma}^{l}(- \frac{{1.5}}
{{2^{k_j }  - 1}},\gamma _{j - 1}) - \M_{\gamma}^{l}(- \frac{{1.5}}
{{2^{k_j }  - 1}},\gamma _{j})} \right\}} }}
{{\sum\limits_{j = 1}^{N - 1} {k_j \left\{ {\M_{\gamma}^{l}(0,\gamma _{j - 1}) - \M_{\gamma}^{l}(0,\gamma _{j})} \right\}} }}\nonumber\\
&\approx \frac{{0.2\sum\limits_{j = 1}^{N - 1} {k_j \left\{ {\M_{\gamma}^{l}(- \frac{{1.5}}
{{2^{k_j }  - 1}},\gamma _{j - 1}) - \M_{\gamma}^{l}(- \frac{{1.5}}
{{2^{k_j }  - 1}},\gamma _{j})} \right\}} }}
{{\sum\limits_{j = 1}^{N - 1} {k_j \left\{ {F_{\gamma}(\gamma _{j - 1}) - F_{\gamma}(\gamma _{j})} \right\}} }}.
\end{align}

Therefore, the average BER in this scenario for \textit{any} fading distribution can be easily obtained by evaluating a finite number of terms involving the IMGF. More specifically, closed-form results new in the literature can be obtained for the case of the $\kappa$-$\mu$ shadowed fading distribution and special cases.
}

\section{Numerical Results}
\label{numerical}

{In this section we provide numerical results for some of the practical scenarios previously analyzed. Specifically, we focus on} the outage probability of the secrecy capacity studied in Section \ref{analysis} under different fading scenarios. We assume that Bob and Eve are only equipped with one antenna, {and for the eavesdropper's channel we set $\kappa=0$ and $\mu=1$}. The effect of system parameters on the $\epsilon$-outage secrecy capacity is also investigated. All the results shown here have been analytically obtained by the direct evaluation of the expressions developed in this paper: Additionally, Monte Carlo simulations have been performed to validate the derived expressions, and are also presented in all figures, showing an excellent agreement with the analytical results. Details on how to compute the confluent bivariate function $\Phi_2$ are given in \cite[App. E]{Paris2014}.

In Figs. \ref{fig1}-\ref{fig5}, the OPSC is represented considering different fading models as a function of the average SNR at Bob $\bar\gamma_b$, for different sets of values of the fading parameters. We assume in these figures that the normalized rate threshold value used to declare an outage is $R_S = 0.1$, and an average SNR at Eve $\bar\gamma_e$ = 15 dB. 

Figs. \ref{fig1} and \ref{fig2}, show results for the $\kappa$-$\mu$ shadowed fading considering, respectively, small ($\kappa=1.5$) and large ($\kappa=10$) LOS components in the received wave clusters for different values of the $\mu$ parameter and also considering light ($m=12$) or heavy ($m=0.5$) shadowing for the LOS components. As expected, as the fading parameter $\mu$ increases, the diversity gain increases too, resulting in a higher slope of the curves in the high SNR regime, and with diminishing returns as $\mu$ increases. Note that $\mu$ represents the number of received wave clusters when it takes an integer value. It can also be observed that the performance is always better when the LOS components are lightly shadowed, and this improvement is much more noticeable for large LOS components.

\begin{figure}[t]
\includegraphics[width=.99\columnwidth]{./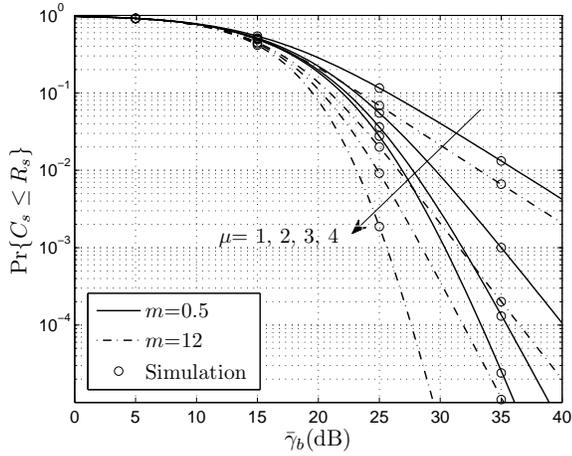}
\caption{Outage probability of secrecy capacity under $\kappa$-$\mu$ shadowed fading as a function of $\bar\gamma_b$, for different values of $m$ and $\mu$. Parameter values: $\kappa=1.5$, $\bar\gamma_e=15$ dB and $R_S=0.1$.}
\label{fig1}
\end{figure}

\begin{figure}[t]
\includegraphics[width=.99\columnwidth]{./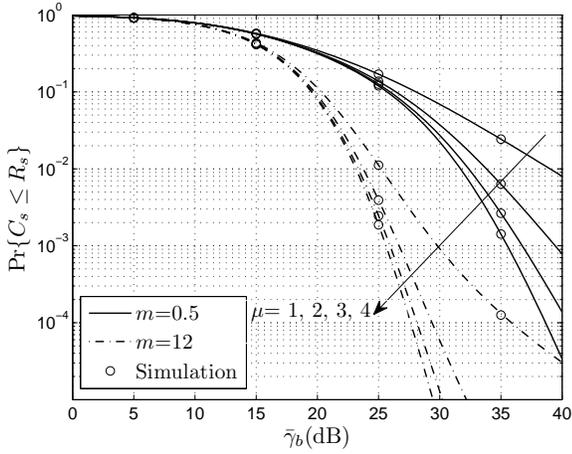}
\caption{Outage probability of secrecy capacity under $\kappa$-$\mu$ shadowed fading as a function of $\bar\gamma_b$, for different values of $m$ and $\mu$. Parameter values: $\kappa=10$, $\bar\gamma_e=15$ dB and $R_S=0.1$.}
\label{fig2}
\end{figure}

The impact of shadowed LOS components on performance can be observed in Fig. \ref{fig3}, where the outage probability of the secrecy capacity under Rician shadowed fading is presented for different values of the $m$ and $K$ parameters. It can be observed that it is more beneficial for the performance to have small LOS components ($K=1.5$) if they are affected by heavy shadowing. Conversely, if the shadowing is mild, large LOS components always yield a lower outage probability. This appreciation can be confirmed by observing the results in Fig. \ref{fig4}, which depicts the outage probability under $\kappa$-$\mu$ fading, i.e., when the LOS components do not experience any shadowing.

\begin{figure}[t]
\includegraphics[width=.99\columnwidth]{./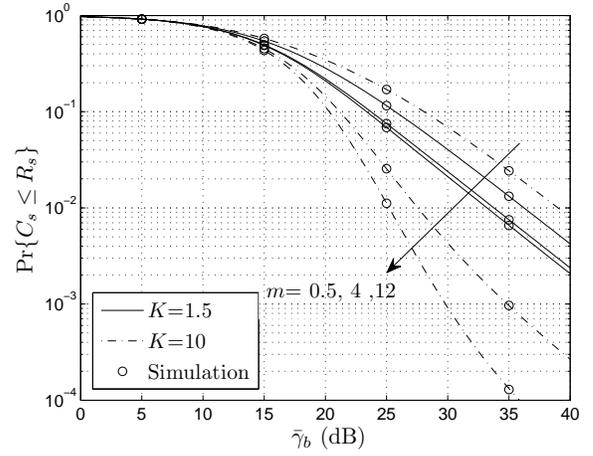}
\caption{Outage probability of secrecy capacity under Rician shadowed fading as a function of $\bar\gamma_b$, for different values of $K$ and $m$. Parameter values: $\bar\gamma_e=15$ dB and $R_S=0.1$.}
\label{fig3}
\end{figure}

\begin{figure}[t]
\includegraphics[width=.99\columnwidth]{./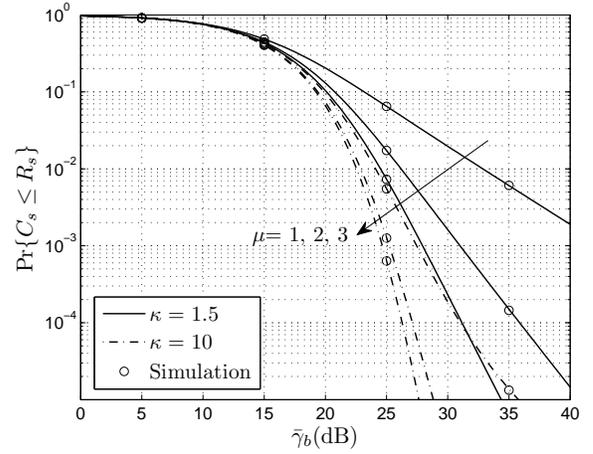}
\caption{Outage probability of secrecy capacity under $\kappa$-$\mu$ fading as a function of $\bar\gamma_b$, for different values of $\kappa$ and $\mu$. Parameter values: $\bar\gamma_e=15$ dB and $R_S=0.1$.}
\label{fig4}
\end{figure}

Fig. \ref{fig5} presents results for the $\eta$-$\mu$ fading, i.e.. in a NLOS scenario on which the in-phase and quadrature components of the scattered waves are not necessarily equally distributed. We consider format 1 of this distribution, for which $\eta$ represents the scattered-wave power ratio between the in-phase and quadrature components of each cluster of multipath, and the number of multipath clusters, when $\mu$ is a semi-integer, is represented by $2\mu$ . It can be observed that, when the in-phase and quadrature components are highly imbalanced ($\eta=0.04$), the performance is poorer. On the other hand, increasing the number of multipath clusters have always a beneficial impact on performance, as the instantaneous  received signal power is smoothed.

\begin{figure}[t]
\includegraphics[width=.99\columnwidth]{./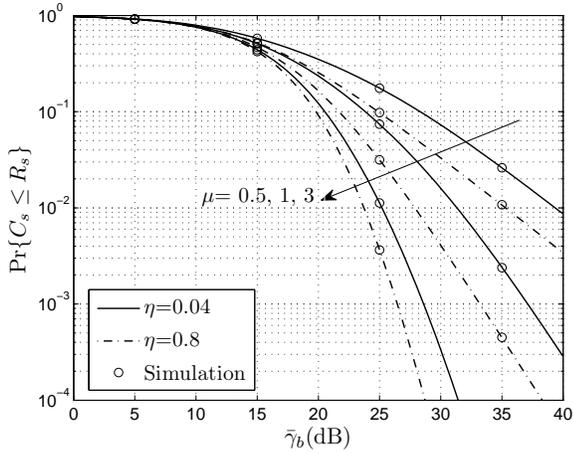}
\caption{Outage probability of secrecy capacity under $\eta$-$\mu$ fading as a function of $\bar\gamma_b$, for different values of $\eta$ and $\mu$. Parameter values: $\bar\gamma_e=15$ dB and $R_S=0.1$.}
\label{fig5}
\end{figure}

Fig. \ref{fig6} shows the normalized $\epsilon$-outage secrecy capacity $C_{\epsilon}$ under $\kappa$-$\mu$ shadowed fading as a function of $\bar\gamma_b$, where the capacity normalization has been computed with respect to the capacity of an AWGN channel with SNR equal to $\bar\gamma_b$. We have assumed $m=2$ and an average SNR at Eve of $\bar\gamma_e=-10$ dB. The corresponding results for $\eta$-$\mu$ fading are presented in Fig. \ref{fig7}, also for $\bar\gamma_e=-10$ dB. In both figures it can be observed that when the outage probability is set to a high value ($\epsilon=0.8$), better channel conditions (i.e. $\mu=6$, $\kappa=10$ for $\kappa$-$\mu$ shadowed fading and $\mu=4$, $\eta=0.9$ for $\eta$-$\mu$ fading) yield a lower $\epsilon$-outage capacity. Conversely, better channel conditions results in a higher capacity for lower values of $\epsilon$. 
Further insight can be obtained from Fig. \ref{fig8}, where it is shown the normalized $\epsilon$-outage secrecy capacity $C_{\epsilon}$ under $\kappa$-$\mu$ fading as a function of the outage probability $\epsilon$ and for different average SNRs at Eve, assuming an average SNR of 10 dB at the desired link. We observe that higher outage probability $\epsilon$ leads to higher $C_{\epsilon}$, having an important influence the average SNR of the eavesdropper's channel. It can also be observed that, as the channel conditions improve, the normalized $\epsilon$-outage secrecy capacity tends to one for all values of the outage probability for low values of $\bar\gamma_e$.

\begin{figure}[t]
\includegraphics[width=.99\columnwidth]{./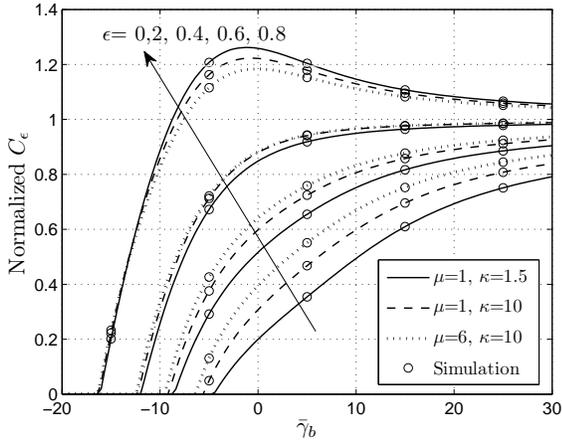}
\caption{Normalized $\epsilon$-outage secrecy capacity $C_{\epsilon}$ under $\kappa$-$\mu$ shadowed fading as a function of $\bar\gamma_b$. Parameter values; $m=2$, $\bar\gamma_e=-10$ dB.}
\label{fig6}
\end{figure}

\begin{figure}[t]
\includegraphics[width=.99\columnwidth]{./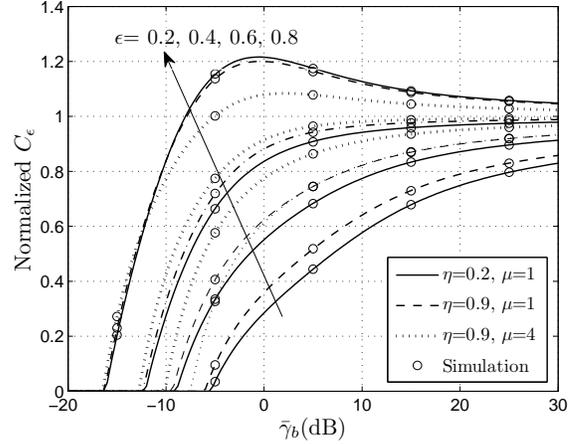}
\caption{Normalized $\epsilon$-outage secrecy capacity $C_{\epsilon}$ under $\eta$-$\mu$ fading as a function of $\bar\gamma_b$. Parameter value $\bar\gamma_e=-10$ dB.}
\label{fig7}
\end{figure}

\begin{figure}[ht]
\includegraphics[width=\columnwidth]{./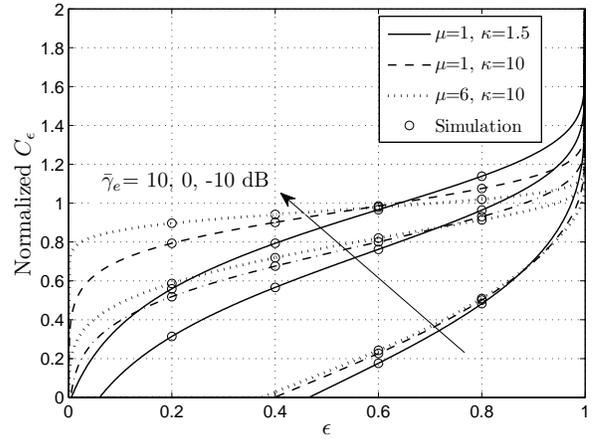}
\caption{Normalized $\epsilon$-outage secrecy capacity $C_{\epsilon}$ under $\kappa$-$\mu$ fading as a function of $\epsilon$. Parameter value $\gamma_b=10$ dB.}
\label{fig8}
\end{figure}

Note that wireless communication over fading channels does not require necessarily the average SNR of the channel between Alice and Bob to be greater than the average SNR of the channel between Alice and Eve, since there is certain probability that the instantaneous SNR of the main channel being higher than the instantaneous SNR of the eavesdropper's channel ($\gamma_b>\gamma_e$) even when $\bar\gamma_b<\bar\gamma_e$. In fact, there is a trade off between the outage probability of the secrecy capacity ${ \Pr \left\{C_S\leq R_S\right\}}$ and the $\epsilon$-outage secrecy capacity $C_{\epsilon}$, where a higher $C_{\epsilon}$ corresponds to a higher outage probability $\epsilon$, and viceversa. For that reason, results in Figs. \ref{fig6}, \ref{fig7}  and \ref{fig8} show that the normalized outage secrecy capacity may have non-zero values even when $\bar\gamma_b \leq \bar\gamma_e$ (for high values of $\epsilon$).

\section{Conclusions}
\label{conclusions}

A fundamental connection between the incomplete MGF of a positive random variable and its complete MGF has been presented. The main takeaway is that the IMGF is expected to have a similar functional form as the CDF, and hence its evaluation should not require any additional complexity. Using this novel connection, closed-form expressions for the IMGF of the $\kappa$-$\mu$ shadowed distribution (and all the special cases included therein) have been derived for the first time in the literature. This has enabled us to introduce a new framework for the analysis of the physical layer security in scenarios on which the desired link is affected by any arbitrary fading distribution, {and the eavesdropper's link undergoes $\kappa$-$\mu$ shadowed fading. 

We hope that the results in this paper may facilitate the performance evaluation for more practical setups related to physical layer security, such as those using artificial noise transmission or the collaboration of a friendly jammer, and, in general for all the situations and scenarios on which the IMGF makes appearances.}

\section*{Acknowledgment}
This work has been funded by the Consejer\'ia de Econom\'ia, Innovaci\'on, Ciencia y Empleo of the Junta de Andaluc\'ia, the Spanish Government
and the European Fund for Regional Development FEDER (projects P2011-TIC-7109, P2011-TIC-8238, TEC2013-42711-R, TEC2013-44442-P and TEC2014-57901-R).

\appendices
\section{Proof of Lemma 2}
\label{app1}


From the definition of OPSC in (\ref{eq_analysis_1}), the probability of achieving a successful secure communication is given by
\begin{align}
\label{eq:sec30}
& \Pr\left\{C_S>R_S\right\}= \Pr \left\{\gamma_e<\frac{1}{2^{R_S}}(1+\gamma_b)-1\right\} .
\end{align}
Note that $\gamma_e$ only takes non-negative values; hence, for this condition to occur in (\ref{eq:sec30}), then the inequality ${\gamma_b \geq 2^{R_S}-1}$ must be satisfied, i.e.
\begin{align}
\label{eq:sec31}
\Pr \left\{\gamma_e<\frac{1}{2^{R_S}}(1+\gamma_b)-1 | \gamma_b < 2^{R_S}-1\right\} = 0.
\end{align}
Therefore we can write
\begin{align}
\label{eq:sec34}
\Pr\left\{C_S>R_S\right\}&= \int_{2^{R_S}-1}^\infty f_{\gamma_b}(x) \left(\int_0^{\frac{1}{2^{R_S}}(1+x)-1} f_{\gamma_e}(y) dy\right) dx  \nonumber \\&=
\int_{2^{R_S}-1}^\infty f_{\gamma_b}(x) F_{\gamma_e}\left(\frac{1}{2^{R_S}}(1+x)-1\right) dx,
\end{align}
where $F_{X}(\cdot)$ and  $f_{X}(\cdot)$ are the CDF and PDF of the random variable $X$, respectively. 

{
Let us first assume that the fading experienced by the eavesdropper's is modeled by th $\kappa$-$\mu$ shadowed distribution \cite{Paris2014}. For integer values of $\mu$ and $m$, the CDF of the $\kappa$-$\mu$ shadowed distribution can be expressed as a mixture of gamma distributions as described in Appendix \ref{app2}, as follows:

\begin{equation}
\label{eqmulti}
F_{\gamma_e}(\gamma_e)=1-\sum_{i=0}^{M}C_i e^{-\frac{\gamma}{\Omega_i}}\sum_{r=0}^{m_i-1}\frac{1}{r!}\left(\frac{\gamma}{\Omega_i}\right)^{r},
\end{equation}

Plugging (\ref{eqmulti}) in (\ref{eq:sec34}) yields
\begin{align}
\label{eqapp2b1}
&\Pr\left\{C_S>R_S\right\}= 1- F_{\gamma_b}(2^{R_S}-1)
-\sum_{i=0}^{M}C_i  \exp{\left(-\frac{1-2^{R_S}}{2^{R_S}\bar \Omega_i}\right)} \nonumber
\\& \times \underbrace{\int_{2^{R_S}-1}^\infty f_{\gamma_b}(x) e^{ -\frac{ x}{2^{R_S}  \bar \Omega_i}} \sum_{r=0}^{m_i-1}   \frac{1}{r!}\left(\frac{1+x-2^{R_S}}{2^{R_S}\Omega_i}\right)^r dx}_{\mathcal{I}},
\end{align}
 
where the parameters $M$, $m_i$, $C_i$ and $\Omega_i$ are defined in Table \ref{table02} in the Appendix \ref{app2} in terms of the parameters $\kappa$, $\mu$, $m$ and $\gamma_e$ of the eavesdropper's fading distribution. Using the binomial expansion, and defining $\alpha=2^{R_S}-1$, $\beta_i=1/(2^{R_S}\Omega_i)$, the integral term $\mathcal{I}$ can be reexpressed as
\begin{align}
\label{eqapp2b2}
\mathcal{I} &=\sum_{r=0}^{m_i-1}\sum_{k=0}^{r}   \frac{(-\alpha)^{r-k}}{k!(r-k)!} \beta_i^{r} \int_{\alpha}^\infty x^k f_{\gamma_b}(x) e^{ -\beta_i x} dx,\nonumber\\
&=\sum_{i=0}^{m_i-1}\sum_{k=0}^{r}   \frac{\alpha^{r-k}}{k!(r-k)!} \beta_i^{r} \frac{\partial^k \M_{\gamma_b}^{u}\left( s,z\right)}{\partial s^k}\bigg{|}_{\stackrel{s=-\beta_i}{z=\alpha}}
\end{align}
where the general derivative property in the transform domain was used, in order to identify the $k^{th}$ derivative of the IMGF. Finally, using (\ref{eqapp2b2}) in (\ref{eqapp2b1}) and (\ref{eq_analysis_1}) yields
\begin{align}
\label{eqfinal01}
\mathcal{P}_{R_S}= \sum_{i=0}^{M}C_i e^{\alpha\beta_i}\sum_{r=0}^{m_i-1}\sum_{k=0}^{r}   \tfrac{(-\alpha\beta_i)^{r}}{k!(r-k)!}  \frac{\partial^k \M_{\gamma_b}^{u}\left( s,z\right)}{\partial s^k}\bigg{|}_{\stackrel{s=-\beta_i}{z=\alpha}}+F_{\gamma_b}\left(\alpha\right).
\end{align}

This completes the proof. }

{
\section{CDF of the $\kappa$-$\mu$ shadowed distribution for integer fading parameters}
\label{app2}


The CDF of the $\kappa$-$\mu$ shadowed fading model was originally given in \cite{Paris2014} as

\begin{align}
F_{\gamma}(\gamma)=&\frac{{\mu ^{\mu-1}  m^m \left( {1 + \kappa } \right)^\mu  }}
{{\Gamma(\mu)\bar \gamma ^\mu  \left( {\mu \kappa  + m} \right)^m }}z^\mu   \times \nonumber\\
 & \Phi _2 \left( {\mu  - m,m;1 + \mu ; { - \tfrac{{\mu \left( {1 + \kappa } \right)}}
{{\bar \gamma }}} \gamma, { - \tfrac{{\mu \left( {1 + \kappa } \right)}}
{{\bar \gamma }}\tfrac{m}
{{\mu \kappa  + m}}} \gamma} \right).
\end{align}

If the fading parameters $\mu$ and $m$ take integer values, this CDF can be expressed as a finite mixture of squared Nakagami distributions, i.e. as a finite sum of exponentials and powers \cite{Lopez2016}. Manipulating the expressions in \cite{Lopez2016}, we can compactly express the CDF as

\begin{equation}
F_{\gamma}(\gamma)=1-\sum_{i=0}^{M}C_i e^{-\frac{\gamma}{\Omega_i}}\sum_{r=0}^{m_i-1}\frac{1}{r!}\left(\frac{\gamma}{\Omega_i}\right)^{r},
\end{equation}

where the parameters $m_i$, $M$ and $\Omega_i$ are expressed in Table \ref{table02} in terms of the parameters of the $\kappa$-$\mu$ shadowed distribution, namely $\kappa$, $\mu$, $m$ and $\bar\gamma$.
}
\begin{table*}[h]
  \renewcommand{\arraystretch}{3}
\centering
{
\caption{Parameter values for the $\kappa$-$\mu$ shadowed distribution with integer $\mu$ and $m$,}
\label{table02}
\begin{tabular}{|c|c|}
\hline\hline
Case $\mu>m$ & Case $\mu \leq m$ \\ \hline\hline 
$M=\mu$ & $M=m-\mu+1$  \\ \hline
 $C_i=\begin{cases} 
      0 & i=0 \\
       \left( { - 1} \right)^m \binom{m+i-2}{i-1}\times \left[ {\frac{m}
{{\mu \kappa  + m}}} \right]^{ m} \left[ {\frac{{\mu \kappa }}
{{\mu \kappa  + m}}} \right]^{ - m - j + 1}  & 0<i\leq \mu-m \\
      \left( { - 1} \right)^{i-\mu+m - 1} \binom{i-2}{i-\mu+m-1} \times \left[ {\frac{m}
{{\mu \kappa  + m}}} \right]^{i-\mu+m - 1} \left[ {\frac{{\mu \kappa }}
{{\mu \kappa  + m}}} \right]^{-i + 1}  & \mu-m < i \leq \mu 
   \end{cases}$
 & $C_i=\binom{m-\mu}{j}\left[ {\frac{m}
{{\mu \kappa  + m}}} \right]^j \left[ {\frac{{\mu \kappa }}
{{\mu \kappa  + m}}} \right]^{m - \mu  - j}$  \\ \hline
  $\Omega_i=\begin{cases}  \frac{{\bar \gamma }}
{{\mu \left( {1 + \kappa } \right)}}, & 0\leq i\leq \mu-m \\
   \frac{{\mu \kappa  + m}}
{m}\frac{{\bar \gamma }}
{{\mu \left( {1 + \kappa } \right)}} & \mu-m < i \leq \mu 
   \end{cases}$ & $\Omega_i=\frac{{\mu \kappa  + m}}
{m}\frac{{\bar \gamma }}
{{\mu \left( {1 + \kappa } \right)}}$  \\ \hline
\ $m_i=\begin{cases} \mu-m-i+1, & 0\leq i\leq \mu-m \\
   \mu-i+1 & \mu-m < i \leq \mu 
   \end{cases}$ & $m_i=m-i$  \\ \hline
\end{tabular}
}
\end{table*}

\bibliographystyle{IEEEtran}
\bibliography{bibfile}


%
%
%



\end{document}